\definecolor{mygreen}{RGB}{20,120,60}
\title{\vspace{-0.5cm}Massively Parallel Symmetry Breaking on Sparse Graphs:\\ MIS and Maximal Matching\footnote{A merger of this paper and the concurrent paper of Brandt, Fischer, and Uitto \cite{concurrent} appeared at PODC 2019 \cite{merger}.}}
\date{}
\author{Soheil Behnezhad\thanks{University of Maryland. Emails: \texttt{\{soheil, mahsa, hajiagha\}@cs.umd.edu}.} \and Mahsa Derakhshan\footnotemark[2] \and MohammadTaghi Hajiaghayi\footnotemark[2] \and Richard M. Karp\thanks{UC Berkeley. Email: \texttt{karp@cs.berkeley.edu}.}}
\newcommand{\MPC}[0]{\ensuremath{\mathsf{MPC}}}
\newcommand{\local}[0]{\ensuremath{\mathsf{LOCAL}}}
\newcommand{\congest}[0]{\ensuremath{\mathsf{CONGEST}}}
\newcommand{\PRAM}[0]{\ensuremath{\mathsf{PRAM}}}
\DeclareMathOperator{\poly}{poly}
\newcommand{\Ot}[1]{\ensuremath{\widetilde{O}(#1)}}
\renewcommand{\O}[1]{\ensuremath{O(#1)}}
\newtheorem{theorem}{Theorem}
\newtheorem{lemma}{Lemma}[section]
\newtheorem{definition}[lemma]{Definition}
\newtheorem{observation}[lemma]{Observation}
\newtheorem{remark}[lemma]{Remark}
\definecolor{mygreen}{RGB}{20,140,80}
\definecolor{myred}{RGB}{110,0,0}
\definecolor{mylightgray}{RGB}{230,230,230}
\newcommand{\smparagraph}[1]{\vspace{0.2cm}\noindent\textbf{#1}}
\algnewcommand{\IIf}[1]{\State\algorithmicif\ #1\ \algorithmicthen}
\algnewcommand{\EndIIf}{\unskip\ \algorithmicend\ \algorithmicif}
\newcounter{myalgctr}
\newenvironment{tbox}{
\vspace{0.2cm}
\begin{tcolorbox}[width=\textwidth,
                  enhanced,
%                  frame hidden,
%                  interior hidden,
                  boxsep=2pt,
                  left=1pt,
                  right=1pt,
                  top=4pt,
                  boxrule=1pt,
                  arc=0pt,
                  colback=white,
                  colframe=black,
                  breakable
                  ]%%
}{
\end{tcolorbox}
}
\newenvironment{graytbox}{
\vspace{0.2cm}
\begin{tcolorbox}[width=\textwidth,
                  enhanced,
                  frame hidden,
%                  interior hidden,
                  boxsep=6pt,
                  left=1pt,
                  right=1pt,
                  top=4pt,
                  boxrule=1pt,
                  arc=0pt,
                  colback=mylightgray,
                  colframe=black,
                  breakable
                  ]%%
}{
\end{tcolorbox}
}
\newcommand{\tboxhrule}[0]{\vspace{0.1cm} \hrule \vspace{0.2cm}}
\newenvironment{titledtbox}[1]{\begin{tbox}#1 \tboxhrule}{\end{tbox}}
\newenvironment{tboxalg}[1]{\refstepcounter{myalgctr}\begin{titledtbox}{\textbf{Algorithm \themyalgctr.} #1}}{\end{titledtbox}}
\newenvironment{quotebox}{
\begin{tcolorbox}[width=\textwidth,
                  enhanced,
                  frame hidden,
                  interior hidden,
                  boxsep=0pt,
                  left=35pt,
                  right=35pt,
                  top=4pt,
                  boxrule=1pt,
                  arc=0pt,
                  colback=white,
                  colframe=black
                  ]%%
\itshape
}{
\end{tcolorbox}
}
\newcommand{\restate}[2]{
\vspace{0.3cm}
\noindent \textbf{#1.} (restated) {\em #2}
\vspace{0.2cm}
}
\begin{document}
\maketitle

\begin{abstract}
\setlength{\parskip}{0.4em}
	The success of modern parallel paradigms such as MapReduce, Hadoop, or Spark, has attracted a significant attention to the {\em Massively Parallel Computation} (\MPC{}) model over the past few years, especially on graph problems. In this work, we consider  symmetry breaking problems of {\em maximal independent set} (MIS) and {\em maximal matching} (MM), which are among the most intensively studied problems in distributed/parallel computing, in \MPC{}.
	
	These problems are known to admit efficient \MPC{} algorithms if the space per machine is near-linear in $n$, the number of vertices in the graph. This space requirement however, as observed in the literature, is often significantly larger than we can afford; especially when the input graph is sparse. In a sharp contrast, in the {\em truly sublinear} regime of $n^{1-\Omega(1)}$ space per machine, all the known algorithms take $\poly\log n$ rounds which is considered inefficient.
	
	Motivated by this shortcoming, we parametrize our algorithms by  the {\em arboricity} $\alpha$ of the input graph, which is a well-received measure of its sparsity. We show that both MIS and MM admit $O(\sqrt{\log \alpha}\cdot\log\log \alpha + \log^2\log n)$ round algorithms using $O(n^\epsilon)$ space per machine for any constant $\epsilon \in (0, 1)$ and using $\widetilde{O}(m)$ total space. Therefore, for the wide range of sparse graphs with small arboricity---such as minor-free graphs, bounded-genus graphs or bounded treewidth graphs---we get an $O(\log^2 \log n)$ round algorithm which exponentially improves prior algorithms.
	
	By known reductions, our results also imply a $(1+\epsilon)$-approximation of maximum cardinality matching, a $(2+\epsilon)$-approximation of maximum weighted matching, and a 2-approximation of minimum vertex cover with essentially the same round complexity and memory requirements.
\end{abstract}

\clearpage

\section{Introduction}
The success of frameworks such as MapReduce \cite{DBLP:conf/osdi/DeanG04, DBLP:journals/cacm/DeanG08}, Hadoop \cite{DBLP:books/daglib/0025439}, or Spark \cite{DBLP:conf/hotcloud/ZahariaCFSS10} has led to a significant interest in better understanding their true computational power. The {\em Massively Parallel Computations} (\MPC{}) model \cite{DBLP:conf/soda/KarloffSV10, DBLP:conf/isaac/GoodrichSZ11, DBLP:conf/stoc/AndoniNOY14, DBLP:journals/jacm/BeameKS17} is arguably the most popular theoretical model that captures the essence of these frameworks while abstracting away their technical details. Compared to traditional parallel or distributed models such as \PRAM{} or \local{}, \MPC{} has advantages such as {\em free} local computation or the possibility of all-to-all communications. In fact, classical parallel algorithms often give rise to \MPC{} algorithms within, asymptotically, the same number of parallel rounds \cite{DBLP:conf/soda/KarloffSV10,DBLP:conf/isaac/GoodrichSZ11}. The main question, however, is whether the advantages of \MPC{} can be leveraged to improve these inherited results. While the answer to this question is clearly positive for a number of problems, it is typically less obvious and highly depends on the problem at hand.

In this paper, we consider two fundamental graph problems of maximal matching and maximal independent set (MIS). While these problems admit trivial sequential greedy algorithms, choosing, in parallel, a subset of vertex-disjoint edges to add to the matching (or a subset of independent vertices to add to the MIS) is non-trivial and requires ``symmetry breaking" between the edges (or vertices) that have similar topologies around them. These problems have been at the heart of parallel/distributed computing from the very early days of the field back in 1980s and have been studied extensively ever since \cite{DBLP:conf/stoc/KarpW84, DBLP:journals/jacm/KarpW85, DBLP:conf/stoc/Luby85, DBLP:journals/ipl/IsraelI86, DBLP:journals/ipl/IsraeliS86, DBLP:journals/jal/AlonBI86, DBLP:conf/focs/Linial87, DBLP:conf/stoc/GoldbergPS87, DBLP:conf/focs/AwerbuchGLP89,DBLP:journals/jacm/BarenboimEPS16, DBLP:conf/soda/Ghaffari16}.

Studying graph problems in the \MPC{} model started with the paper of Karloff et al.~\cite{DBLP:conf/soda/KarloffSV10} who gave $O(1)$ round algorithms for MST and connectivity when the space per machine is $\Theta(n^{1+\delta})$; here $n$ is the number of vertices and $\delta > 0$ is any arbitrary constant. Henceforth, many other graph problems, including maximal matching and MIS, enjoyed $O(1)$ round algorithms in this regime of $\Theta(n^{1+\delta})$ space per machine \cite{DBLP:conf/spaa/LattanziMSV11, DBLP:conf/spaa/KumarMVV13, DBLP:conf/spaa/AhnG15, DBLP:conf/spaa/BehnezhadDETY17, DBLP:conf/spaa/AssadiK17, DBLP:conf/nips/BateniBDHKLM17, harvey2018greedy}. Starting with the breakthrough of \cite{DBLP:conf/stoc/CzumajLMMOS18}, a series of recent papers \cite{DBLP:journals/corr/Coresets, DBLP:journals/corr/ImprovedMPC, DBLP:journals/corr/Konrad}, remarkably, reduced this space requirement to $O(n)$ or even $O(n/\poly\log n)$ while incurring only a slight blow-up of roughly $O(\log \log n)$ on the round complexity. This spectacular progress, however, seems to  inherently depend on the availability of enough space per machine to store nearly all the nodes.

The space requirement of $\widetilde{\Omega}(n)$ is  suitable for {\em dense graphs} where it is mainly the edges of the graph that contribute to its massive size. It, however, defeats the purpose of massive parallelism if the graph is {\em sparse} --- the main focus of this paper --- as one can fit nearly the whole input on one machine!\footnote{We note that these challenges faced for sparse graphs are in a way reminiscent of that of  other big-data settings such as {\em streaming} or {\em sublinear} algorithms.}  This is unfortunate, since many real-world large-scale graphs, such as social networks, tend to be sparse \cite{DBLP:conf/stoc/CzumajLMMOS18}. The most interesting set of parameters for sparse graphs is the {\em truly sublinear} regime of  $O(n^{\epsilon})$ space per machine where $\epsilon < 1$ is a constant.

Adapting known \MPC{} algorithms for maximal matching \cite{DBLP:conf/spaa/LattanziMSV11} or MIS \cite{DBLP:journals/corr/ImprovedMPC, DBLP:journals/corr/Konrad, harvey2018greedy} to the truly sublinear regime offers no benefit. In fact the round complexity of all these algorithms blows, at least, up to $\Omega(\log n)$ --- a bound that  can also be achieved by simulating three decades old algorithms of \cite{DBLP:conf/stoc/Luby85, DBLP:journals/jal/AlonBI86, DBLP:journals/ipl/IsraelI86}. However, going back to the main motivation for considering the truly sublinear regime, which was the case of sparse graphs, it is natural to ask:

\begin{quotebox}
	Can we take advantage of the sparsity of the input graph to improve inherited \PRAM{}/\local{} algorithms in the truly sublinear regime of the \MPC{} model?
\end{quotebox}

\vspace{-0.6cm}
\paragraph{Parametrizing by arboricity.} To address the question above, we initiate the study of truly sublinear \MPC{} algorithms that are parametrized by {\em arboricity} of the input graph. The arboricity of a graph is the minimum number of forests into which its edges can be partitioned. Equivalently, Nash-Williams~\cite{nash1964decomposition} showed that it can be defined as the density of the densest subgraph.\footnote{More precisely, arboricity can be defined as $\max_{S \subseteq V, |S|\geq 2} \lceil |E(S)|/(|S|-1) \rceil$ where $E(S)$ denotes the set of edges between the vertices in $S$.} Arboricity is a well-received measure of sparsity that does not impose strict structural constraints such as planarity, bounds on maximum degree, or the like \cite{DBLP:journals/dc/BarenboimE10, DBLP:conf/focs/BarenboimEPS12, DBLP:conf/soda/EsfandiariHLMO15, DBLP:conf/esa/CormodeJMM17, DBLP:journals/corr/GrigorescuMZ16}. Indeed most families of sparse graphs, including graphs that exclude a fixed minor (such as planar graphs), graphs of bounded genus, bounded degree, bounded treewidth, or pathwidth have all constant arboricity. Furthermore, graphs with constant arboricity may also have a genus of up to $O(n)$ or have $K_{\sqrt{n}}$ as a minor. We note that none of our algorithms assume arboricity is bounded by a constant.

We show that both maximal matching and MIS can be solved in $O(\sqrt{\log \alpha}\log \log \alpha + \log^2\log n)$ rounds where $\alpha$ denotes arboricity. Remarkably, our algorithms do not require to be given the arboricity of the graph. Since arboricity may be up to $n$, this bound still requires $O(\log n)$ rounds in the general case. However, for graphs with a moderately smaller arboricity, it implies an exponential improvement over the round complexity of inherited algorithms. For instance, Barenboim et al.~\cite[Theorem 7.7]{DBLP:conf/focs/BarenboimEPS12} show,  by adapting the celebrated lower bounds of Kuhn et al.~\cite{DBLP:journals/jacm/KuhnMW16}, that even to compute a maximal matching of trees --- which by definition, have arboricity only 1 --- any \local{} algorithm provably requires $\Omega(\sqrt{\log n})$ rounds.

\smparagraph{Comparision with graph connectivity.} It is worth noting that some known hard inputs for graph connectivity in the truly sublinear regime for which no $o(\log n)$ round algorithm is known (and is, in fact, conjectured to not exist \cite{DBLP:conf/spaa/RoughgardenVW16, DBLP:conf/stoc/AndoniNOY14}) have $O(1)$ arboricity. One example is the so called {\em one-cycle vs two-cycle} problem where we are promised that the input is composed of either two cycles or one cycle. Here, the arboricity of the input graph is only 2. It is therefore perhaps surprising that the seemingly harder problems of maximal matching and MIS can be solved exponentially faster for such graphs. In comparison, when the space per machine is $\widetilde{\Theta}(n)$, graph connectivity can be solved in $O(1)$ rounds \cite{DBLP:conf/soda/Jurdzinski018, DBLP:journals/corr/abs-1802-10297, DBLP:journals/corr/abs-1805-02974} but the fastest algorithms known for MIS and (approximate) matching take $O(\log \log n)$ rounds \cite{DBLP:conf/stoc/CzumajLMMOS18, DBLP:journals/corr/Coresets, DBLP:journals/corr/ImprovedMPC, DBLP:journals/corr/Konrad}.

\subsection{The \MPC{} Model}
We consider the most restrictive variant of the {\em Massively Parallel Computations} (\MPC{}) model which was initially introduced by \cite{DBLP:conf/soda/KarloffSV10} and further refined by \cite{DBLP:conf/isaac/GoodrichSZ11, DBLP:conf/pods/BeameKS13, DBLP:journals/jacm/BeameKS17, DBLP:conf/stoc/AndoniNOY14}. An input of size $N$ is initially distributed among $M$ machines, each having a local space of size $S$. Computation proceeds in synchronous rounds: Within each round, each machine performs a local computation on its data and at the end communicates with other machines. The only restriction on the communications is that the total size of the messages sent or received by each machine should not exceed its memory. We desire algorithms with a substantially sublinear space of $S = N^{1-\Omega(1)}$ per machine and ideally only enough total space to store the input, i.e., $S \cdot M = \widetilde{O}(N)$. Moreover, we are interested in algorithms that can be adjusted to use a local space of size $S = O(N^{\epsilon})$ for any constant $\epsilon \in (0, 1)$.

For graph problems, the input graph $G=(V, E)$ with $n$ vertices and $m$ edges is given as follows: The edges, which are pairs of their endpoints' IDs, are initially distributed (adversarially) among the machines; thus, the input size is $O(m)$. Moreover, the space per machine is assumed to be $O(n^{\epsilon})$ for any desirably small constant $\epsilon \in (0, 1)$.

\subsection{Further Related work}
\noindent\textbf{MIS on trees.} The most relevant to our work, is the paper of Brandt, Fischer and Uitto~\cite{mistreesmpc} in which they design an $O(\log^3\log n)$ round algorithm to find MIS of trees in the same \MPC{} setting. Their algorithm is based on a clever {\em subsampling} idea. They use structural properties of trees to show that if we sample the edges uniformly at random with an appropriate probability $p$, then the tree is decomposed into small subtrees of diameter at most $O(\log_{1/p} n)$ and size at most $O(n^\epsilon)$. They then gather each subtree into a machine in $O(\log \log n)$ rounds and find an MIS on it, which they show reduces the maximum degree of the main graph by a factor of $\Delta^{\Omega(1)}$. This means that only $O(\poly \log \log n)$ iterations of this procedure is sufficient to make max degree polylogarithmic where known \local{} algorithms can be employed to solve the problem in $O(\log \log n)$ rounds. 

We were able to generalize the algorithm of Brandt et al. to solve maximal matching on trees in  $O(\poly \log \log n)$ rounds as well. However, subsampling does not preserve the above-mentioned characteristics beyond trees, even when the arboricity is 2. For example, the argument that shows subsampling reduces diameter to $O(\log_{1/p} n)$ is based on the fact that there are only $O(n^2)$ paths in trees and each path of length at least $3 \log_{1/p} n$ is completely sampled with probability $p^{3 \log_{1/p} n} = 1/n^3$ (thus we can use union bound). However, even on a grid, which has arboricity 2, we may have exponentially many paths and the argument above breaks down. In fact, one can construct a delicate input with arboricity $O(1)$ that has $n^{\Omega(1)}$ vertices of degree at least $n^{\Omega(1)}$ where subsampling leads to connected components that do not fit the memory of a single machine.

\smparagraph{\PRAM{}/\local{} algorithms.} As mentioned above, traditional parallel algorithms that are not extremely resource heavy can be seamlessly simulated within asymptotically the same number of rounds in \MPC{} \cite{DBLP:conf/soda/KarloffSV10,DBLP:conf/isaac/GoodrichSZ11}. Here we briefly overview known results in these settings to show that our algorithms indeed use the ``full power" of \MPC{} to improve them. On the \PRAM{} model, algorithms of Luby~\cite{DBLP:conf/stoc/Luby85} and Israeli and Itai~\cite{DBLP:journals/ipl/IsraelI86} can be used to solve MIS and maximal matching in $O(\log n)$ rounds. This is however much larger than our running time of $O(\log \alpha + \log^2 \log n)$ if $\alpha \leq n^{o(1)}$.    On the \local{} model, for graphs of arboricity $\alpha$, algorithms of \cite{DBLP:conf/soda/Ghaffari16, DBLP:journals/jacm/BarenboimEPS16} respectively solve MIS and maximal matching in $O(\log \alpha + \sqrt{\log n})$ rounds. As mentioned before, these bounds are tight at least for maximal matching due to the $\Omega(\sqrt{\log n})$ lower bounds of \cite{DBLP:journals/jacm/KuhnMW16, DBLP:conf/focs/BarenboimEPS12} on unrooted trees which have arboricity 1. Our algorithms improve these bounds significantly if $\alpha \leq o(2^{\sqrt{\log n}})$ and in fact exponentially if $\alpha \leq \poly \log n$. We note that there are also a handful of faster \PRAM{}/\local{} algorithms for special cases. For instance, if the input graph is a rooted tree, its MIS can be solved in $O(\log^* n)$ rounds of \PRAM{}/\local{} \cite{DBLP:conf/stoc/ColeV86}. We refer to \cite{DBLP:journals/jacm/BarenboimEPS16, DBLP:series/synthesis/2013Barenboim} for a more thorough overview of known results in these settings.

\smparagraph{Other big-data settings.} The intricacy designing graph algorithms in the truly sublinear regime with $n^{1-\Omega(1)}$ local space, where $n$ denotes the number of vertices, also extends to other ``big-data" models such as the {\em streaming} setting. There has been a long line of research in estimating the size of maximum matching, particularly in graphs of bounded arboricity using sublinear in $n$ space in the streaming setting (see e.g., \cite{DBLP:conf/soda/EsfandiariHLMO15, DBLP:conf/soda/ChitnisCEHMMV16, DBLP:conf/approx/McGregorV16, DBLP:conf/esa/CormodeJMM17, DBLP:conf/soda/AssadiKL17, DBLP:conf/soda/AssadiKL17, DBLP:conf/soda/0001V18} and the references therein).

\smparagraph{Concurrent work.} In an independent and concurrent work, Brandt, Fischer, and Uitto \cite{concurrent} also consider maximal matching and MIS on low arboricity graphs in the truly sublinear regime of \MPC{}. The round complexity and memory requirements of both works are essentially the same. The main technical ingredient of both results is an $O(\log^2 \log n)$ round algorithm that reduces maximum degree to $\poly(\alpha, \log n)$ implying  $O( T\big(\poly(\alpha, \log n)\big) + \log^2\log  n )$ round algorithms for MIS or maximal matching where $T(d)$ denotes the number of rounds required to solve these problems on a graph with maximum degree $d$.\footnote{We note that the round complexity of our algorithm (as well as that of \cite{concurrent}) can also be expressed as a function of $\Delta$ to be $O(T(\poly (\alpha, \log n)) + \log \log \Delta \cdot \log \log n)$. For clarity purposes, our main results are only expressed as functions of $\alpha$ and $n$.}  It is shown in both papers that $T(d) \leq O(\log d)$ by simulating algorithms of \cite{DBLP:conf/soda/Ghaffari16, DBLP:conf/focs/BarenboimEPS12}, meaning that the round complexity is $O(\log \alpha + \log^2\log n)$. Another concurrent work by Ghaffari and Uitto \cite{sqrtlogd} quadratically improves the bound on $T(d)$. Using this as a black-box, the round complexity of our algorithms as well as those of Brandt et al. can be improved to $O(\sqrt{\log \alpha} \cdot \log\log \alpha + \log^2\log n)$.

\section{Technical Overview}

Our main result is what follows. In the sequel we overview the main intuitions in achieving it.

\newcommand{\mainthm}[0]{
	For any given graph $G$ with $n$ vertices, $m$ edges, and arboricity $\alpha$, and for any desirably small constant $\epsilon \in  (0, 1)$, there exists an algorithm that with high probability\footnote{As standard, {\em with high probability} indicates probability at least $1-n^{-c}$ for any desirably large constant $c$.} computes a maximal independent set (or maximal matching) of $G$ in $O(\sqrt{\log \alpha}\cdot \log\log \alpha + \log^2 \log n)$ rounds of \MPC{} using $\O{n^{\epsilon}}$ space per machine and $\widetilde{O}(m)$ total memory.
}

\begin{graytbox}
\begin{theorem}\label{thm:main}
	\mainthm
\end{theorem}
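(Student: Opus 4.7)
The plan is to decompose the algorithm into two phases: a degree-reduction phase of $O(\log^2 \log n)$ rounds that brings the maximum residual degree down to $\poly(\alpha, \log n)$, followed by a bounded-degree finish that applies an existing \local{}-to-\MPC{} simulation. For the finish I would invoke the Ghaffari--Uitto \local{} algorithm \cite{sqrtlogd}, which solves MIS/MM on graphs of maximum degree $d$ in $O(\sqrt{\log d}\cdot \log \log d)$ rounds with only $O(\log n)$-bit messages per edge; such messages easily fit the $n^\epsilon$ space budget per machine, so the simulation proceeds round-for-round. Substituting $d = \poly(\alpha, \log n)$ then yields the claimed bound $O(\sqrt{\log \alpha}\cdot \log \log \alpha + \log^2 \log n)$.

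The heart of the proof is the degree-reduction phase, which I would structure as $O(\log \log n)$ outer stages, each costing $O(\log \log n)$ \MPC{} rounds and shrinking the maximum residual degree from $\Delta$ to $\max\{\Delta^{1-\delta},\, \poly(\alpha, \log n)\}$ for a fixed constant $\delta > 0$. Within a stage, I would simulate $\Theta(\log \log n)$ consecutive rounds of a Ghaffari-style randomized MIS process (respectively, a randomized greedy matching process) by gathering each active vertex's $2$-hop neighborhood into a single machine and executing the local computation there; repeated doubling of the ``view radius'' each machine collects is what allows $\Theta(\log \log n)$ \local{} rounds to be compressed into $O(\log \log n)$ \MPC{} rounds. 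To keep the total memory at $\widetilde{O}(m)$, any vertex whose current degree exceeds the stage's threshold would be deferred to later stages, where neighborhoods are even sparser.

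The main obstacle is bounding the size of the $2$-hop ball that must be loaded into a machine: the $n^\epsilon$ space budget demands that this ball have size $n^{o(1)}$, whereas a vertex can naively have $\Delta^2 = n^{\Omega(1)}$ vertices in its $2$-hop neighborhood. This is precisely the point where the tree-only approach of Brandt, Fischer, and Uitto \cite{mistreesmpc} breaks down, since random edge subsampling only preserves small diameters on trees. I would overcome this by combining an arboricity-based orientation (orient every edge so that each vertex has out-degree $O(\alpha)$) with a Barenboim--Elkin-style $H$-partition that buckets vertices into $O(\log n)$ layers, each of in-layer out-degree $O(\alpha)$. Processing vertices in the correct layer order ensures that when a vertex is processed, only a small ``outward'' subset of its neighbors is still active, so the dependency ball stays within one machine.

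The randomized analysis of the simulated Ghaffari-style process then gives that, after $\Theta(\log \log n)$ simulated local rounds per stage, each surviving vertex's effective degree shrinks by a polynomial factor with high probability, delivering the needed per-stage progress. Summing these $O(\log \log n)$ stages gives the $O(\log^2 \log n)$ bound on Phase 1, and composing with the Phase 2 black-box finishes the proof. The technical work is concentrated in verifying the memory and progress invariants for each stage; the maximal matching case should follow from the same framework after replacing vertex-centric MIS simulation with an analogous edge-centric maximal-matching simulation, using the same $H$-partition and out-degree-$O(\alpha)$ orientation to control local neighborhood sizes.
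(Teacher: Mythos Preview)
Your high-level decomposition---an $O(\log^2\log n)$-round degree-reduction phase organised as $O(\log\log n)$ stages, followed by a Ghaffari--Uitto finish on the residual $\poly(\alpha,\log n)$-degree graph---matches the paper exactly. The gap is entirely in the per-stage mechanism.

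The paper does \emph{not} simulate a generic MIS process inside a stage. It uses the Barenboim--Elkin--Pettie--Schneider degree-reduction subroutine (the paper's Algorithm~1): in a graph of arboricity $\alpha$ with current maximum degree $\Delta \gg \poly(\alpha,\log n)$, one call to this subroutine removes a $\Delta^{\Omega(1)}$ fraction of the high-degree vertices (by matching them to, or putting into $I$, low-degree neighbours). Hence only $O(\log_\Delta n)$ \local{} rounds are needed to drop the maximum degree from $\Delta$ to $\sqrt{\Delta}$. These $O(\log_\Delta n)$ rounds are then compressed to $O(\log\log_\Delta n)$ \MPC{} rounds by the blind-coordination lemma: one collects the $t$-hop of each vertex for $t=\Theta(\epsilon\log_\Delta n)$, and the point is that $\Delta^{t}=n^{\Theta(\epsilon)}$ fits in one machine \emph{automatically}, with no orientation or layering needed. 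The $\widetilde O(m)$ total-space bound is then obtained by only growing balls around the remaining high-degree vertices and letting their per-vertex budget rise as their number falls.

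Your replacement ingredients do not deliver the same guarantees. Simulating $\Theta(\log\log n)$ rounds of a Ghaffari-style MIS process gives no polynomial drop in maximum degree; Ghaffari's analysis needs $\Theta(\log\Delta)$ rounds before anything like a $\Delta^{1-\delta}$ bound holds, so a stage as you describe it simply does not make the claimed progress. The $H$-partition idea does not rescue this: the partition has $\Theta(\log n)$ layers, so ``processing vertices in the correct layer order'' already costs $\Omega(\log n)$ rounds, and computing the partition (or an $O(\alpha)$-out-degree orientation) in sublinear-space \MPC{} is itself a nontrivial task you have not budgeted for. Finally, deferring vertices whose degree exceeds the stage threshold is backwards---those are precisely the vertices the stage must eliminate. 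The missing idea is the arboricity-specific degree-reduction subroutine and the $t=\Theta(\log_\Delta n)$ ball-collection trick that together make each stage both fast and space-bounded.
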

\end{graytbox}

\begin{remark}
The algorithm for Theorem~\ref{thm:main} does not require to be given $\alpha$. Furthermore, for all graphs of arboricity up to $\poly \log n$, even if $\epsilon = 1/\poly\log \log n$ (i.e., the space per machine is mildly sub-polynomial) the algorithm takes only $O(\poly \log \log n)$ rounds with high probability.	
\end{remark}

\begin{remark}\label{rem:corollaries}
	Using known reductions, by employing our maximal matching algorithm, a $(1+\epsilon)$-approximation for {\em maximum matching} \cite{DBLP:journals/corr/Coresets, DBLP:conf/approx/McGregor05}, a $(2+\epsilon)$-approximation for {\em maximum weighted matching} \cite{DBLP:journals/siamcomp/LotkerPR09}, and a $2$-approximate {\em vertex cover} can be obtained in asymptotically the same number of rounds of \MPC{} with the same memory requirements given that $\epsilon$ is any arbitrarily small constant.
\end{remark}

As observed by \cite{mistreesmpc}, the insufficiency of space to store all the vertices in one machine imposes challenges similar to those faced by algorithms in the \local{} \cite{PelegBook} model: There is one processor on each of the nodes of the input graph and two processors can communicate in each round if and only if there is an edge between their corresponding vertices. The fact that the vertices, in the truly sublinear regime of \MPC{}, have to make decisions (such as joining the MIS) based solely on a {\em small} neighborhood that they observe around them, makes the algorithmic challenges of the two models similar. We need to keep in mind, however, that the constraints that impose such locality in the two models are fundamentally different. Roughly, in \local{}, the diameter of the subgraph that each vertex can observe is small but in \MPC{}, it is the size of this subgraph that is restricted to be sublinear. Neither of the two subsumes the other, a graph with small diameter may have a large size and a sublinear size graph may have a large diameter.

However, a key difference between the two models that makes us hope for faster \MPC{} algorithms is the possibility of all-to-all communications. To illustrate this over a simple example, consider a directed path $(v_1, v_2, \ldots, v_d)$. It is not hard to see that in the \local{} model, it takes at least $d - 1$ rounds for $v_1$ to send one bit of message to $v_d$. However, thanks to all-to-all communications, it can be done in only $O(\log d)$ rounds of \MPC{} using the well-known {\em pointer jumping} technique: Initially, for any $i < d$, set $p(v_i) := v_{i+1}$  and in each round update it to be $p(v_i) \gets p(p(v_i))$. In only $O(\log d)$ rounds $p(v_1)$ will point to $v_d$. This is possible since vertex $v_i$ can directly communicate with vertex $u = p(v_i)$ and ask for the value of $p(u)$. Achieving such exponential improvements, however, is typically much more intricate for other problems due to the space restrictions of \MPC{}. For readers familiar with the {\em congested clique} \cite{PelegBook} model, we note that the availability of all-to-all communications there also allows for such improvements. However, congested clique is much stronger than \MPC{} with sublinear in $n$ space. In fact, congested clique is almost equivalent to the variant of \MPC{} with $\Theta(n)$ space per machine \cite{DBLP:journals/corr/abs-1802-10297}.

To further demonstrate the relevance of the above {\em exponential growth} idea to our problems, we recall a beautiful (and  well-known) property of \local{} algorithms. In any $r$-round \local{} algorithm, the final state of each node/edge is merely a function of its $r$-hop (i.e., the nodes/edges that are at distance at most $r$). This has been extensively  used in the literature to prove lower bounds, but has also given rise to a few algorithmic ideas (see e.g., \cite{DBLP:journals/tcs/ParnasR07, DBLP:conf/soda/AlonRVX12} and the follow-up work or \cite{DBLP:conf/podc/Ghaffari17}). Combined with the $O(\log n)$ round \local{} algorithm of Luby~\cite{DBLP:conf/stoc/Luby85} for MIS, or that of Israeli and Itai~\cite{DBLP:journals/ipl/IsraelI86} for maximal matching, this property implies that if in \MPC{}, we manage to collect the $O(\log n)$-hop of each vertex in a machine responsible for it, we can locally simulate these algorithms without any further communications.\footnote{We note that since these algorithms are randomized, one also needs to collect the tape of random bits of each vertex as well so that the results computed on different machines are compatible.} Using the exponential growth idea, we hope to be able to do this in much faster than $O(\log n)$ rounds. There are however two fundamental barriers for this:

\begin{description}
	\item[Local memory barrier.] The $O(\log n)$-hop of a vertex may be as large as $\Omega(m)$, exceeding the local space of a machine. Even the 1-hop of a vertex with degree higher than $\omega(n^{\epsilon})$ cannot be stored in one machine.
	\item[Global memory barrier.] Storing the neighborhood of each vertex on its corresponding machine leads to multiple copies of each vertex and thus a total aggregated memory of significantly larger than the input size, $m$. 
\end{description}

Let us forget the global memory barrier for now (which actually turns out to be an important restriction) and focus on handling the local memory problem. Denote the maximum degree of the graph by $\Delta$ and set $\delta = \epsilon/3$. We can safely assume for $t=\delta \log_{\Delta} n$, that the $t$-hop of every vertex fits the memory of one machine since $\Delta^{\delta \log_\Delta n} = n^\delta < O(n^{\epsilon})$. This implies that we can indeed simulate $t$ rounds of a \local{} algorithm in one round if we first collect the $t$-hops (which we show takes only $O(\log t)$ rounds). However, $t$ is usually smaller than the actual number of rounds that the algorithm takes. A way to overcome this is to share the {\em states}. That is, having the state of each vertex by the end of round $t$, we can share these states with all other machines in one round of communication and simulate the next $t$ rounds of the algorithm to obtain the states by round $2t$. We can repeat this to simulate $r$ rounds of our \local{} algorithm in $O(r/t + \log t)$ rounds. Henceforth, we call this technique {\em blind coordination}. Note that for this idea to work, the states of the \local{} algorithm have to be crucially small so that they can be shared and stored on the machines. However, even incorporating blind coordination does not help when the maximum degree is large. For instance, when $\Delta = \Omega(n^\epsilon)$, even the 1-hop of a vertex may not fit the memory of a single machine, meaning that blind coordination does not lead to any improvements. This implies that the main challenge, similar to many other known \MPC{} algorithms, is to reduce the the maximum degree of the graph.

For ease of exposition and to convey the intuitions, we assume in this section that the arboricity of the input graph is $O(1)$. We borrow a subroutine first introduced by Barenboim et al.~\cite[Theorem 7.2]{DBLP:conf/focs/BarenboimEPS12} for the \local{} model and use it in a novel way to reduce the degree to our desired bound. This algorithm, with slight modifications, guarantees that for any $\tau \geq \log^{O(1)} n$ (that is also sufficiently larger than arboricity,) one can reduce the maximum degree to $\tau$ in $O(\log_{\tau} n)$ rounds by committing a subset of edges (or vertices) to the maximal matching (or MIS).\footnote{Since we assume that arboricity is constant in this section, we have hidden the actual dependence of the running time on the arboricity.} Call a vertex $v$ {\em high-degree} if $\deg(v) > \tau$ and {\em low-degree} otherwise. This round complexity is achieved since the algorithm removes $\tau^{\Omega(1)}$ fraction of high-degree vertices in each round by matching them to their low-degree neighbors (or by adding their low-degree neighbors to MIS). The algorithm turns out to be very simple to implement and intuitive. For instance for maximal matching, in each round, after discarding a subset of edges, each low-degree vertex proposes to one of its high-degree neighbors uniformly at random and then each high-degree vertex gets matched to one of its proposing neighbors (if any) arbitrarily.\footnote{See Algorithm~\ref{alg:degreereduction} for the formal statement.} 

The intuition behind the analysis of this subroutine is roughly as follows: Fix a high-degree vertex $v$ and suppose it is likely to survive $\ell$ rounds and remain high-degree. For this to happen, not only almost all neighbors of $v$ have to be high-degree, but the neighbors of its neighbors should also be high-degree and this should continue for roughly $\ell$ levels. Due to the small arboricity of the graph, these high-degree vertices cannot be highly inter-connected (otherwise we have a dense subgraph) and thus each level requires $\tau^{\Omega(1)}$ additional nodes. Therefore, $\ell$ cannot exceed $O(\log_\tau n)$.

This subroutine helps in finding maximal matching or MIS in $O(\sqrt{\log n})$ rounds of \local{} when the graph has a small arboricity. Without delving into details, this is achieved by setting $\tau = O_\alpha(2^{\sqrt{\log n}})$ and then using another algorithm on the remaining lower degree graph.

To use the advantages of \MPC{} to improve over this bound exponentially, instead of assigning a fixed value to $\tau$ and using the subroutine in one shot, we iteratively assign different values to $\tau$ and combine it with the blind coordination lemma described above. More precisely, we divide the algorithm into $O(\log \log n)$ phases (not rounds) that in turn reduce the maximum degree by a polynomial factor from $\Delta$ to $\sqrt{\Delta}$ until it eventually becomes desirably small. This iterative process, intuitively, helps in the following way: if we are in a phase where the maximum degree of the graph is large, say $\Omega(n)$, reducing it to $\sqrt{n}$ takes only $O(\log_{\sqrt{n}} n) = O(1)$ rounds. Therefore, we can afford to directly simulate the algorithm in \MPC{} without any round compressions. Moreover, when the maximum degree gets smaller into a point where $O(\log_{\Delta} n)$ becomes the bottleneck, we can use the blind coordination procedure which precisely works well when the maximum degree is small. There are $O(\log \log n)$ phases, each takes at most $O(\log \log n)$ rounds to simulate (due to blind coordination); thus the algorithm overall takes only $O(\log^2 \log n)$ rounds.

Another nice property of iteratively changing the thresholds that we set for $\tau$ is that we do not require to know the arboricity as opposed to the above-mentioned \local{} algorithms. The reason is that, once we reach an unsuccessful phase, i.e., a phase where the maximum degree is not reduced to the desired bound (which is easy to check in \MPC{}), it is w.h.p. guaranteed to be of size $\poly \alpha \cdot \poly \log n$, thus we can terminate the future phases and switch to the finish-up phase.

\smparagraph{Optimizing the global memory.} The challenge in optimizing the global memory is mainly centered around the blind coordination procedure which we used extensively in the above algorithm. This actually turns out to be a rather serious problem and we are not aware of any way to generally apply blind coordination without using $n^{1+\Omega(\epsilon)}$ total space which may be much larger than $m$, the input size. To illustrate this, we first show why a natural idea does not work and then proceed to show how we optimize total memory using specific properties of our algorithms.

At the first glance, it seems extremely wasteful to store the $t$-hop (recall that $t=\delta \log_{\Delta} n$ and $\delta = \epsilon/3$) of  {\em every} vertex to simulate $t$ rounds of a \local{} algorithm. Indeed having the $t$-hop of a vertex $v$, implies that not only we can compute the state of $v$ after $t$ rounds, but also implies that we can compute that of its direct neighbors after $t-1$ rounds, since their $(t-1)$-hop is also included in this subgraph, and so on. One may wonder whether it is possible to compute all the states after $\Omega(t)$ rounds by collecting only the $t$-hops of only a subset of the vertices. Unfortunately, such ideas do not generally work and to compute the state of every vertex after $d \cdot t$ rounds, one can construct a graph on which we inevitably need $n^{1+\Omega(d \cdot t)}$ total space.

Here we only highlight the intuitions that lead to bypassing the barrier mentioned above. Suppose that our goal is to apply blind-coordination to simulate $t$ rounds of a \local{} algorithm. Our main intuition is that if we can manage to show structurally that the state of a vertex $v$ is finalized by some round $i \ll t$ and does not change afterward, then having the $i$-hop of $v$ suffices to simulate the algorithm by round $t$. To show a simple concrete example, recall that in each phase of our algorithm we reduce the maximum degree from $\Delta$ to $\sqrt{\Delta}$. Within each phase, a low-degree vertex $v$ (i.e., $\deg(v) < \sqrt{\Delta}$) whose all neighbors are also low-degree, is completely ignored by the algorithm. Therefore, we do not need to collect a large neighborhood around this vertex to simulate the algorithm by the end of the current phase. Complications arise since ignoring these vertices may not release enough space to collect the $t$-hop of other vertices. For instance, it could be the case that a large fraction of the vertices are indeed high-degree. Total memory management in such scenarios turns out to be much more challenging. We have to adaptively detect high-degree vertices whose states are finalized by simulating a few rounds and then stop growing the regions around them. We show that with careful analysis and adjustments to the algorithm, total memory can be reduced from $\widetilde{O}(m+n^{1+\Omega(\epsilon)})$ to $\widetilde{O}(m)$ while keeping the round complexity asymptotically the same.

\section{Basic Algorithmic Tools for \MPC{}}
In this section we describe a set of basic algorithmic primitives for graph problems in the \MPC{} model.

\subsection{Load Balancing}

Throughout the paper, for different applications, we encounter the following problem: A number is written on each of the vertices of the graph, and for every vertex, we need to compute a function of the numbers written on its neighbors. The simplest case is finding the degree of each vertex where the numbers are simply one and the function is sum. Another example is finding the minimum label written on the neighbors of each vertex to break symmetry. The problem is that if a vertex has degree higher than the space per machine, we are not able to store the numbers written on its neighbors in one  machine and the task has to be distributed. We show that simple functions such as max, min, sum, etc., can be computed in $O(1)$ rounds using $\Ot{m}$ total space.

To remain as general as possible, we define {\em separable} functions. All the aforementioned functions such as $\max$, $\min$, sum, etc., are separable.

\begin{definition}\label{def:separable}
Let $f: 2^{\mathbb{R}} \rightarrow \mathbb{R}$ denote a set function. We call $f$ \emph{separable} iff for any set of reals $A$ and for any $B \subseteq A$, we have $f(A) = f\big(f(B), f(A \setminus B)\big)$.
\end{definition} 

The following lemma implies that it is possible to compute the value of a separable function $f$ on each of the vertices in merely $O(1)$ rounds. The proof is a simple application of the well-known {\em balls into bins} problem; thus we defer it to Section~\ref{sec:load-balancing}.

\newcommand{\separablelemma}[0]{
Suppose that on each vertex $v \in V$, we have a number $x_v$ of size $O(\log n)$ bits and let $f$ be a separable function. There exists an algorithm that in $O(1)$ rounds of \MPC{}, for every vertex $v$, computes $f(\{x_u \, | \, u\in N(v) \})$ and with probability at least $1-n^{-c}$ (for any desirably large constant $c$) uses $\O{n^{\epsilon}}$ space per machine and $\Ot{m}$ total space where $\epsilon$ is any desirably small constant in $(0, 1)$.}
\begin{lemma} \label{lem:generalmaxload}
\separablelemma{}
\end{lemma}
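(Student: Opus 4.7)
The plan is to exploit separability to reduce each vertex's computation of $f(\{x_u : u \in N(v)\})$ to a constant-depth tree reduction with fan-in $n^{\Theta(\epsilon)}$, executed in parallel across all vertices via random hashing for load balancing. Since $\epsilon$ is a constant, the depth is $O(1/\epsilon) = O(1)$, which gives the promised $O(1)$ round bound.

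Fix $k := n^{\epsilon/3}$ and run $t := \lceil 3/\epsilon \rceil$ levels of reduction. At the start of level $i$, each vertex $v$ is associated with some collection of partial values (at level $0$, these are the $x_u$ for $u \in N(v)$), each of $O(\log n)$ bits and each equal to $f$ applied to some block of a partition of $N(v)$; inductively, $v$ owns at most $\lceil \deg(v)/k^i \rceil$ of them. One level proceeds as follows: (i) for each $v$, arrange its current partial values into blocks of size at most $k$ and tag every block with a fresh identifier $(v,i,g)$; (ii) hash these identifiers uniformly at random to a pool of $M = \Theta(m \log n / n^\epsilon)$ machines so that all items sharing an identifier co-locate; (iii) on each machine, apply $f$ locally to each co-located block and emit a single partial value per block as input to level $i+1$. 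By the separability identity applied $t$ times, the unique partial value remaining at level $t$ for vertex $v$ equals $f(\{x_u : u \in N(v)\})$; a final round delivers it to $v$'s home machine. Initializing level $0$---delivering $x_u$ from $u$'s home machine to the machines that will process $u$ in the first batch---is done symmetrically by running a broadcast along the same tree structure, again in $O(1)$ rounds.

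Memory accounting reduces to a standard balls-into-bins bound on step (ii). The number of groups at any level is at most $\sum_v \lceil \deg(v)/k\rceil \le O(m/k + n) = O(m)$ (assuming, without loss of generality, no isolated vertices). Hashing these $O(m)$ groups into $M = \Theta(m\log n / n^\epsilon)$ machines gives an expected load of $O(n^\epsilon/\log n)$ per machine, so a Chernoff bound with deviation constant tuned to $c$, combined with a union bound over the $M = n^{O(1)}$ machines and the $O(1)$ levels, yields a maximum load of $O(k)$ groups per machine with probability at least $1 - n^{-c}$. Each such machine thus stores $O(k^2) = O(n^{2\epsilon/3})$ items of $O(\log n)$ bits each, comfortably within the $O(n^\epsilon)$ per-machine budget; the total space per level is $O(m \log n) = \widetilde{O}(m)$, and the $O(1)$ levels preserve this bound.

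The hard part here is entirely the load-balancing analysis: correctness is immediate from the recursive application of the separability identity, and the round and space bounds then fall out mechanically from the constant depth. The only thing that can go wrong is that some machine overflows during step (ii), which is precisely what the Chernoff-plus-union-bound argument above controls, with the Chernoff deviation constant picked in terms of the target failure exponent $c$.
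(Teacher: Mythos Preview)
Your approach is essentially the same as the paper's: both exploit separability to do a depth-$O(1/\epsilon)$ tree reduction with fan-in $n^{\Theta(\epsilon)}$, and both rely on a balls-into-bins argument to distribute the work across machines. The paper packages the hashing step as an auxiliary lemma and phrases the reduction as iteratively shrinking, by a factor of $n^{\epsilon}$ per round, the number of machines holding data relevant to each vertex; your phrasing in terms of labeled blocks is just a different bookkeeping of the same idea.

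There is, however, a concrete arithmetic error in your load-balancing step. You correctly compute the expected number of groups per machine as $O(n^{\epsilon}/\log n)$, but then assert that Chernoff yields a maximum load of $O(k)=O(n^{\epsilon/3})$ groups per machine. Chernoff cannot push the maximum load \emph{below} the expectation; the max load is $\Theta(n^{\epsilon}/\log n)$ groups w.h.p., not $O(k)$. If you then multiply by the worst-case group size $k$, you get $O(n^{4\epsilon/3}/\log n)$ items per machine, overflowing the $O(n^{\epsilon})$ budget. The fix is to bound the total number of \emph{items} per machine directly as a weighted balls-into-bins instance: the total item count is $\sum_v \deg(v)=2m$, each group (ball) carries weight at most $k=n^{\epsilon/3}$, and the expected weight per bin is $2m/M=O(n^{\epsilon}/\log n)$. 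Since the maximum ball weight is polynomially smaller than the expected bin load, standard concentration gives a maximum of $O(n^{\epsilon}/\log n)$ items per machine w.h.p., which is $O(n^{\epsilon})$ bits and within budget. With this correction the argument goes through.
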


We remark that even if $\epsilon$ is sub-constant, Lemma~\ref{lem:generalmaxload} works within $O(1/\epsilon)$ rounds. For ease of exposition, we assume $\epsilon$ is constant throughout the paper unless explicitly stated otherwise. Nonetheless, an extra factor of $1/\epsilon$ appears in the round complexity of our algorithms if $\epsilon$ is sub-constant.

\subsection{Exponential Growth via All-to-All Communication}\label{sec:expgrowth}

The {\em exponential growth} technique allows us to collect the $t$-hop of every vertex in $O(\log t)$ rounds so long as we are guaranteed that the size of each of them is sufficiently small. The idea is to inductively collect the $2^{i}$-hop of every vertex by round $i$. We note that similar techniques have been used in the literature under different names such as broadcasting, adding 2-hops, etc. \cite{mistreesmpc, DBLP:journals/corr/abs-1805-03055, DBLP:conf/podc/Ghaffari17}

\begin{lemma}\label{lem:collecting}
Given that for any vertex $v$, size of its $t$-hop is bounded by $n^{\beta}$ for any $\beta \leq \epsilon /2$, there exists an algorithm that gathers the $t$-hop of every vertex in at least one machine within at most $O(\log t)$ rounds of \MPC{} using $O(n^{\epsilon})$ space per machine and $O(n^{1+2\beta})$ total space.
\end{lemma}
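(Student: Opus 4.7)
The plan is to prove the lemma by a doubling (``pointer jumping'' style) procedure: inductively maintain the invariant that after round $i$, the machine responsible for vertex $v$ stores the entire $\min(2^i,t)$-hop of $v$. The base case $i=0$ is trivial since every machine initially knows (via the load-balancing primitive of Lemma~\ref{lem:generalmaxload}) the identifiers of the neighbors of the vertices it hosts. After $\lceil \log_2 t\rceil$ rounds the invariant yields the $t$-hop, giving the claimed round complexity.

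For the inductive step I would implement round $i+1$ as follows. For every vertex $v$, the machine holding $v$'s $2^i$-hop sends a request to the machine responsible for each vertex $u$ in that $2^i$-hop asking for $u$'s current $2^i$-hop; the receiver then ships back its stored subgraph and the requester takes the union to form $v$'s $2^{i+1}$-hop (truncated to the $t$-hop once $2^i \geq t$). Because the $t$-hop of every vertex has size at most $n^\beta$, every intermediate $2^i$-hop is a subgraph of it and hence also has size at most $n^\beta$. Thus the data shipped to the machine of $v$ in one round is at most $(\text{size of }2^i\text{-hop of }v) \cdot (\max_u \text{size of }2^i\text{-hop of }u) \le n^{\beta}\cdot n^{\beta} = n^{2\beta} \le n^{\epsilon}$, fitting on one machine. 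Symmetrically, the total outgoing traffic from the machine hosting $u$ is proportional to the number of $v$'s whose $2^i$-hop contains $u$; these requests can be load-balanced across machines using standard MPC sorting/routing (or Lemma~\ref{lem:generalmaxload}) in $O(1)$ rounds, since the aggregate traffic in the round is $\sum_v |\text{$2^i$-hop of }v| \cdot n^{\beta} \le n \cdot n^{2\beta} = n^{1+2\beta}$.

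The global memory bound follows from the same counting argument: at every moment we store, for each of the $n$ vertices, a subgraph of size at most $n^{\beta}$, for a base footprint of $O(n^{1+\beta})$, and the transient communication of any single round is $O(n^{1+2\beta})$, which dominates and matches the claim. Local memory is $O(n^{\epsilon})$ because each machine needs to hold (i) the $O(n^\beta)$-size ball assigned to it and (ii) the $O(n^{2\beta})\le O(n^{\epsilon})$ bytes of incoming neighbor-balls during the doubling step.

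The main obstacle I anticipate is the routing/load-balancing sub-step: a highly ``popular'' vertex $u$ (appearing in the $2^i$-hops of many $v$'s) could receive a flood of identical requests, and naively answering them would overload its machine. The standard fix is to replicate $u$'s $2^i$-hop via an $O(1)$-round aggregation tree of size equal to the demand (which is at most $n^{1+2\beta}$ in total across the round and thus fits in the global budget), answering all requests in parallel; this can be implemented by first sorting the request tuples and then broadcasting through intermediaries. Modulo this standard primitive, the induction goes through, and after $\lceil \log_2 t\rceil = O(\log t)$ rounds every machine responsible for a vertex $v$ holds the entire $t$-hop of $v$, proving the lemma.
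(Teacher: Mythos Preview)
Your proposal is correct and follows essentially the same doubling/exponential-growth strategy as the paper: maintain $\mathcal{N}(v)=$ current $2^i$-hop, and in each round replace it by $\bigcup_{u\in\mathcal{N}(v)}\mathcal{N}(u)$, truncated at the $t$-hop. The round count, the $n^{2\beta}\le n^{\epsilon}$ bound on the incoming data per vertex, and the $O(n^{1+2\beta})$ global footprint are obtained identically.

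The only real difference is bookkeeping. The paper fixes the per-machine load up front by assigning $k=O(n^{\epsilon-2\beta})$ vertices to each of $\Theta(n^{1+2\beta-\epsilon})$ machines, so that the $n^{2\beta}$ bytes received per vertex sum to $O(n^{\epsilon})$ per machine automatically; it then dismisses the outgoing side with ``a similar argument.'' You instead leave the assignment implicit and appeal to sorting/aggregation-tree primitives to deal with the popular-vertex hot-spot. Both are valid; your treatment is arguably more honest about the send-side congestion, while the paper's packing trick makes the receive-side bound immediate without invoking any auxiliary routing primitive.
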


\begin{proof}
We first assign vertices to machines such that any of the $\Theta(n^{1+2\beta-\epsilon})$ machines is responsible for at most $k = O(n^{\epsilon -2\beta})$ vertices. This can easily be done, e.g., by making machine number $i$ responsible for vertices with ID in $\{(i-1)k+1, \ldots, ik\}$. Note that each machine has enough space to store data of size $\O{n^{2\beta}}$ for any vertex that it is responsible for since $k\cdot n^{2\beta} = n^{\epsilon}$.
	Therefore, it only remains to collect the $t$-hop into each machine in $O(\log t)$ rounds. The algorithm is what follows: In round 1, for any vertex $v$, each edge incident to $v$ is sent to the machine responsible for $v$. We call the set of all these edges $\mathcal{N}(v)$. Then iteratively for $O(\log t)$ rounds, each machine, for any vertex $v$ that it is responsible for, and for any vertex $u \in \mathcal{N}(v)$, requests $\mathcal{N}(u)$ from the machine responsible for $u$ and updates $\mathcal{N}(v)$ to be $\mathcal{N}(v) \gets \cup_{u \in \mathcal{N}(v)} \mathcal{N}(u)$. Throughout the algorithm, we further ensure that for each vertex $v$, $\mathcal{N}(v)$ only contains the edges in its $t$-hop. This can be simply checked within each machine.
	
	One can easily confirm that by the end of iteration $i+1$, $\mathcal{N}(v)$ contains the vertices in the $2^{i}$-hop of vertex $v$, therefore after $O(\log t)$ iterations of the algorithm the vertices in the $t$-hop of every vertex is stored in the machine responsible for it.
	%Also, the edges in the $t$-hop of $v$ can be simply gathered by sending a requests the machines responsible for its vertices.
	It remains to show that the algorithm does not violate the messages limits and the space restrictions. Each machine, as argued above, is responsible for only $O(n^{\epsilon - 2\beta})$ vertices. Each of these vertices will have a $t$-hop of size at most $n^{\beta}$. Therefore, the collection of the $t$-hops of all these vertices has size at most $O(n^{\epsilon - \beta})$. It might happen that in the final round, we request the $t$-hop of each vertex collected in a machine, but since each of them sends a subgraph of size $O(n^\beta)$, the total size of messages received by each machine is at most $O(n^{\epsilon - \beta} \cdot n^\beta) = O(n^{\epsilon})$. A similar argument shows that no machine sends more than $O(n^\epsilon)$ messages. Since we have $O(n^{1+2\beta - \epsilon})$ machines, each with a local space of size $O(n^{\epsilon})$, the total space is $O(n^{1+2\beta})$.	
%	\begin{tboxalg}{Collecting the $t$-hop neighborhood of each vertex in its responsible machine.}\label{alg:collecting}
%		\begin{enumerate}[label={(\arabic*)}, topsep=0pt,itemsep=0ex,partopsep=-1ex,parsep=1ex,leftmargin=*]
%			\item In round 1, for each vertex $v$, all of its incident edges are sent to the machines that is responsible for it.
%			\item For any vertex $v$ we initially set $\mathcal{F}_1(v) \gets N(v)$, $\mathcal{N}_1(v) \gets N(v)$.
%			\item For $i \in 1\ldots \lceil \log t \rceil$ rounds, we do the following:
%			 	\begin{enumerate}
%			 		\item Each machine, for each vertex $v$ that it is responsible for, and for any vertex $u \in \mathcal{F}_{i}(v)$, sends a message of ``need neighborhood of $u$" to the machine responsible for $u$.
%			 		\item Each machine, for any received message of ``need neighborhood of $u$", sends the set $\mathcal{N}_{i}(u)$ to the machine that sent this request.
%			 		\item Each machine for each vertex $v$ that it is responsible for, sets $$\mathcal{N}_{i+1}(v) \gets \mathcal{N}_{i}(v) \cup \Big(\bigcup_{u\in \mathcal{F}_{i}(v)} \mathcal{N}_{i}(u)\Big), \qquad \mathcal{F}_{i+1}(v) \gets \mathcal{N}_{i+1}(v) \setminus \mathcal{N}_{i}(v).$$
%			 	\end{enumerate}
%		\end{enumerate}
%	\end{tboxalg}
\end{proof}

%\begin{lemma}
%	Let $\epsilon'$ be $\min\{\epsilon, \delta\}/3$ and define $t$ to be $\lfloor \epsilon' \log_\Delta n \rfloor$. There exists a low-memory MPC algorithm that in $O(\log \log_{\Delta} n)$ rounds, collects the $t$-hop of every vertex in the machine responsible for it.
%\end{lemma}

\section{Blind Coordination}
The goal of this section is to highlight the simple but powerful concept of {\em blind coordination} that we use extensively in the forthcoming sections. We apply this technique to compress multiple rounds of a large class of \local{} algorithms, which we call {\em state-congested} local algorithms in much fewer number of rounds of \MPC{}. Roughly speaking, in a state-congested local algorithm, we can maintain {\em states} on the vertices/edges over the rounds of the algorithm, but we restrict these states to be of size $O(\log n)$ bits and be dependent (loosely speaking) only on the states of the 1-hop of every vertex/edge at the previous round.

\begin{definition}\label{def:reclocal}
	A distributed \local{} algorithm is {\em state-congested} if:
	\begin{enumerate}
		\item By the end of each round $r$, on any node $v$ (and respectively on any edge $e$), the algorithm stores a state $s_r(v)$ (resp. $s_r(e)$) of size $O(\log n)$ bits. The initial state $s_0(v)$ of each vertex $v$ is its ID and the initial state $s_0(e)$ of each edge $e$ is the IDs of its two endpoints.
		\item The state $s_r(v)$ of each node $v$ by the end of any round $r$, depends only on its state $s_{r-1}(v)$ in the previous round, the states of its incident edges $\{s_{r-1}(e)\, | \, e \ni v \}$ in the previous round, and its tape $\rho(v)$ of $\poly\log n$ random bits. Furthermore, the state $s_r(e)$ of each edge $e=(u, v)$ by the end of round $r$ is only a function of $s_{r-1}(u)$,  $s_{r-1}(v)$ and $s_{r-1}(e)$.

		\item The states of the vertices/edges at the last round of the algorithm are sufficient in determining, collectively, the output of the algorithm.
	\end{enumerate}
\end{definition}

The key property of state-congested local algorithms is that the intermediate states of the algorithm are also small. This is in contrast, for example, with algorithms in which each vertex $v$ first collects its, say, $O(\log n)$-hop and then makes its final decision in one shot. We note that state-congested local algorithms are similar to, but more restrictive, than a variant of \local{} algorithms called \congest{} (see \cite{PelegBook}) where the messages over the links are restricted to have $O(\log n)$ bits. Before describing the main result of this section, we need another definition.

\begin{definition}
	We call a state-congested local algorithm {\em low-memory}, if updating the state each node $v$ can be done in a space of size $O(\deg(v) \cdot \poly\log n)$ bits and updating the state of each edge $e$ requires a space of size $O(\poly \log n)$ bits.
\end{definition}

The definition above is required to ensure, e.g., that once we have the states and random tapes of all neighbors of a node $v$, we can update the state of $v$ without using any extra space. This is almost always satisfied. 

We are now ready to formalize the main lemma of this section which results in compressing a state-congested local algorithm in much fewer number of rounds of a low-memory \MPC{} if the maximum degree $\Delta$ of the graph is small. The main theorem is as follows:

\begin{lemma}\label{lem:blindcoordination}
	For any graph with $n$ vertices, $m$ edges, and maximum degree $\Delta \leq n^{\epsilon}$, where $\epsilon$ is a desirably small constant number in $(0, 1)$, one can compress $r$ rounds of any low-memory state-congested local algorithm in $O\big(\frac{r}{\log_\Delta n} + \log \log_\Delta n\big)$ rounds of \MPC{} using $O(n^{\epsilon})$ space per machine and $O(m+n^{1+2\epsilon/3})$ total space.
\end{lemma}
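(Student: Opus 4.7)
The plan is to first fix a batch length $t$ small enough that the $t$-hop of every vertex fits on one machine, collect all those $t$-hops once via exponential growth, and then run the state-congested algorithm in batches of $t$ rounds, refreshing states between batches by a short all-to-all exchange. Concretely, I set $\delta = \epsilon/3$ and $t = \lfloor \delta \log_\Delta n \rfloor$, so that the $t$-hop of any vertex has at most $\Delta^t \leq n^{\epsilon/3}$ vertices. Invoking Lemma~\ref{lem:collecting} with $\beta = \epsilon/3$ gathers, for every vertex $v$, the induced subgraph on its $t$-hop (along with initial states $s_0(u)$ and $\poly\log n$-bit random tapes $\rho(u)$ for each $u$ in the $t$-hop) into the machine responsible for $v$, in $O(\log t) = O(\log \log_\Delta n)$ rounds, with $O(n^\epsilon)$ space per machine and $O(n^{1+2\epsilon/3})$ total space. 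The $O(m)$ portion of the total-space bound simply covers the distributed input edges.

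Given those $t$-hops, each machine can locally simulate $t$ rounds of the algorithm on every $t$-hop it stores: by state-congestedness each vertex/edge state is $O(\log n)$ bits and its update depends only on the previous states of 1-hop neighbors and the random tape; by low-memoryness each update fits in $O(\deg(v)\poly\log n)$ space, so simulating $t$ rounds on a $t$-hop of size at most $n^{\epsilon/3}$ sits comfortably within the $O(n^\epsilon)$ local budget. After simulating a batch of $t$ rounds, each machine knows the updated state $s_t(v)$ of every vertex $v$ it is responsible for. I partition the $r$ rounds of the given local algorithm into $\lceil r/t \rceil$ such batches.

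Between consecutive batches I need each machine to refresh the states stored in each of its $t$-hops; the topology itself is unchanged, so only $O(\log n)$-bit state values have to move. Each machine enumerates the $O(n^{2\epsilon/3})$ vertices whose states it needs, emits one key-value request per vertex, and we serve these requests by a standard \MPC{} many-to-many lookup: sort the requests by requested vertex to co-locate each request with the unique holder of the state, reply, and then sort back by requester. With $O(n^\epsilon)$ space per machine this sort-based lookup runs in $O(1/\epsilon)=O(1)$ rounds, and the total traffic is $\tilde O(n^{1+\epsilon/3})$, safely below the $O(n^{1+2\epsilon/3})$ space budget.

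Summing, the algorithm uses $O(\log \log_\Delta n)$ rounds for the one-time collection plus $O(1)$ rounds per batch times $\lceil r/t\rceil$ batches, giving the claimed $O(r/\log_\Delta n + \log \log_\Delta n)$ rounds, within the claimed memory bounds. I expect the main technical obstacle to be the refresh step: a single very popular state might in principle be requested by $\Omega(n^{1-\epsilon/3})$ machines in the same batch, potentially overloading its holder in one round. The fix is the routine \MPC{} load-balancing trick of duplicating each state value to several auxiliary machines during the sort (a standard balls-into-bins argument as in Lemma~\ref{lem:generalmaxload}), so that each machine's incoming request stream stays within $O(n^\epsilon)$; since total request volume is already bounded, this duplication does not break either the space budget or the $O(1)$ per-batch round count.
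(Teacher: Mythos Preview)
Your proposal is correct and follows essentially the same approach as the paper: set $t=\lfloor(\epsilon/3)\log_\Delta n\rfloor$, collect every $t$-hop once via Lemma~\ref{lem:collecting}, then simulate the algorithm in batches of $t$ rounds with a constant-round state-sharing step between batches. The only difference is that you spell out the state-refresh step (via sort-based lookup plus load balancing) more carefully than the paper does; the paper simply says each machine ``shares the states of the vertices/edges that it is responsible for with other machines'' and leaves the routing details implicit.
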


\begin{proof}
Suppose that our goal is to compress $r$ rounds of a low-memory state-congested local algorithm $\mathcal{A}$. Initially, each vertex will be assigned to a machine that will be responsible for keeping track of its state. Note that since the total space is at least $n^{1+2\epsilon/3}$, and memory per machine is $O(n^{\epsilon})$, we have at least $\Omega(n^{1-\epsilon/3})$ machines. It suffices to make each machine responsible for $n^{2\epsilon/3}$ vertices. This assignment can easily be done based on, say, the vertices' IDs. Let $t := \lfloor\frac{\epsilon}{3} \log_\Delta n \rfloor$. We first  collect the $t$-hop of each vertex in the machine that is responsible for it. Note that $t$ is chosen to be small enough that the $t$-hop of every vertex has at most $\Delta^{t} \leq n^{\epsilon/3}$ edges which is substantially smaller than the memory per machine. Therefore, we can use the exponential growth algorithm of Lemma~\ref{lem:collecting} to collect the $t$-hop of every vertex in the machine responsible for it in only $O(\log t) = O(\log \log_\Delta n)$ rounds using a total memory of size $\O{m+n^{1+2\epsilon/3}}$.

After collecting the neighborhoods, on each machine we run $t$ rounds of $\mathcal{A}$ on the subgraph that is stored in it. This can be done in only one round of \MPC{} since no communication between the machines is required. The main intuition behind the {\em blind-coordination} idea is that the final state of a vertex $v$ in the machine that is responsible for it is exactly the same as that of $v$ after $t$ rounds of the original algorithm $\mathcal{A}$. We emphasize that a vertex $v$ may also be stored in machines not responsible for $v$, and in fact, the states computed for $v$ in those machines might be completely different from its {\em correct} state by the end of round $t$ of $\mathcal{A}$. However, crucially, the state of each vertex matches its correct state in the machine responsible for it. Formally, let us denote by $s_i(.)$ the state of a vertex or an edge by the end of round $i$ of algorithm $\mathcal{A}$ and denote by $\hat{s}_{\mu, i}(.)$, for any $i \leq t$, the state of a vertex or an edge after simulating $i$ rounds of $\mathcal{A}$ on the subgraph stored in machine $\mu$; we have:

%\begin{corollary}\label{cor:1}
%		Let $\mu$ be the machine responsible for a vertex $v$. We have $\hat{s}_{\mu, 2t}(v) = s_{2t}(v)$. Moreover, for any edge $e \ni v$, we have $\hat{s}_{\mu, 2t-1}(e) = s_{2t-1}(e)$.
%\end{corollary}
%
%As a corollary of the following lemma is Corollary~\ref{cor:1}.

\begin{observation}\label{obs:linial}
Let $v$ denote an arbitrary vertex whose $i$-hop is stored in machine $\mu$ and let $\beta$ be an arbitrary non-negative integer. If for any vertex or edge $x$ in the $i$-hop of $v$ we have its correct state by round $\beta$, i.e., $\hat{s}_{\mu, \beta}(x) = s_{\beta}(x)$, then we compute the correct state of $v$ after $\beta +i$ rounds in machine $\mu$ , i.e., $\hat{s}_{\mu, \beta+i}(v) = s_{\beta+i}(v)$ without any round of communication. Similarly, for any edge $e$ incident to $v$, we have $\hat{s}_{\mu, \beta+i}(e) = s_{\beta+i}(e)$. 
\end{observation}
\begin{proof}
	We simply prove this by induction on $i$. For $i=1$, since we have the 1-hop of $v$ and the state of edges incident to $v$ are correct by the end of round $\beta$, the machine computes the correct state $\hat{s}_{\mu, \beta + 1}$ for $v$ by definition of state-congested local algorithms. For larger values of $i$, having the $i$-hop of $v$ in machine $\mu$ implies that we also have the $(i-1)$-hop of its neighbors and the state computed for them after $\beta+i-1$ rounds matches $s_{\beta + i -1}$. Thus, in the next step, we correctly compute the state of $v$ after $\beta + i$ rounds. The same argument holds for the edges incident to $v$.
\end{proof}

%\begin{observation}\label{obs:linial}
%Let $v$ denote an arbitrary vertex in machine $\mu$. If for any vertex or edge $x$ in the $t$-hop of $v$ and an arbitrary number $\alpha$ we have $\hat{s}_{\mu, \alpha}(x) = s_{\alpha}(x)$ then $\hat{s}_{\mu, 2t+\alpha}(v) = s_{2t+\alpha}(v)$ holds. Also for any $e \ni v$, we have $\hat{s}_{\mu, 2t+\alpha-1}(e) = s_{2t+\alpha-1}(e)$. 
%\end{observation}
%\begin{proof}
%	We can simply prove this by induction on $t$. It is easy to check by definition of state-congested local algorithms that this holds for $t=1$. 	
%	
%	For the induction step it suffices to prove the following claim. If for any vertex or edge $x$ in \thop{v}{$\tau$} and an arbitrary number $\tau'$ we have $\hat{s}_{\mu, \tau'}(x) = s_{\tau'}(x)$ then $\hat{s}_{\mu, \tau'+2}(x) = s_{\tau'+2}(x)$ holds for any $x$ in \thop{v}{($\tau$-1)}. This yields from the definition of state-congested local algorithms and the fact that vertices adjacent to any vertex in \thop{v}{($\tau$-1)} are in \thop{v}{$\tau$} as well. Thus $\hat{s}_{\mu, 2\tau+\tau'}(v) = s_{2\tau+\tau'}(v)$ holds for $v$ if for any vertex in \thop{v}{$\tau$} we have $\hat{s}_{\mu, \tau'}(x) = s_{\tau'}(x)$. Also, note that $\hat{s}_{\mu, 2\tau+\tau'-1}(e) = s_{2\tau+\tau'-1}(e)$ holds for any edge $e$ connected to $v$ since the state of any vertex is based on the state of its incident edges in the previous round. 
%\end{proof}

Recall that our goal was to compress $r$ rounds of a low-memory state-congested algorithm $\mathcal{A}$ in few rounds of a low-memory \MPC{} algorithm. By the discussion above, after collecting the $t$-hop of every vertex in $O(\log \log_\Delta n)$ rounds, if $r < t$, then the simulation takes only $O(1)$ extra rounds to complete. For most applications, however, $r$ is much larger than $t$. In such cases, we cannot afford to collect the $r$-hop neighborhood of a vertex in one machine as its size may exceed the space per machine. The idea, here, is to compress every $t$ rounds of $\mathcal{A}$ in $O(1)$ rounds of our low-memory \MPC{} algorithm. To do this, with the above-mentioned approach we can access the state of each vertex and edge after $t$ rounds. The idea, then, is that each machine shares the states of the vertices/edges that it is responsible for with other machines. Once collected this information, each machine then runs $\mathcal{A}$ for another $t$ rounds with respect to the now updated states. By the end of this round, we are aware of the state $s_{2t}(.)$ of each vertex/edge on its responsible machine. We can continue this process for $r/t$ rounds to complete compression of $r$ rounds of $\mathcal{A}$. Overall it takes only $O(\frac{r}{t} + \log t) = O(\frac{r}{\epsilon \log_\Delta n} + \log \log _\Delta n)$ rounds to have the states of all vertices/edges by the end of round $r$.
\end{proof}

\section{Fast \MPC{} Algorithms for Maximal Matching \& MIS}

In this section, we describe our algorithms to find a maximal matching or a maximal independent set of the input graph. We first show in Section~\ref{sec:lowdegree} how we can handle graphs with small maximum degree and then describe how we reduce maximum degree of the input graph in Sections~\ref{sec:partialdegreereduction}, \ref{sec:degreereductioninefficient} using a total memory of $\widetilde{O}(m+n^{1+2\epsilon/3})$ and finally describe the main algorithm in Section~\ref{sec:mainalgorithm} with the optimized total space of size $\widetilde{O}(m)$.

\subsection{Low-Degree Graphs}\label{sec:lowdegree}

In this section, we consider graphs with small maximum degree and show how we can quickly find their maximal matching or MIS.

\begin{lemma}\label{lem:solvinglowdegree}
	For any given graph $G=(V, E)$ of maximum degree $\Delta \leq n^{\epsilon/16}$ where $\epsilon \in (0, 1)$ is a desirably small number that satisfies $\log n = O(n^{\epsilon/4})$, there exists an algorithm that with high probability computes an MIS (or maximal matching) of $G$ in $O(\log \Delta + \log \log n)$ rounds of \MPC{} using $O(n^{\epsilon})$ space per machine and $\widetilde{O}(m)$ total memory.
\end{lemma}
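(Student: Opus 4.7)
My plan is to execute the $\Theta(\log \Delta)$-round shattering phase of a known LOCAL MIS (resp.\ maximal matching) algorithm by direct simulation in \MPC{}, and then gather and locally resolve the polylogarithmic residual components that remain. For MIS I would invoke Ghaffari's algorithm \cite{DBLP:conf/soda/Ghaffari16} and for maximal matching the Barenboim--Elkin--Pettie--Schneider algorithm \cite{DBLP:journals/jacm/BarenboimEPS16}; a classical property of both is that after $\Theta(\log \Delta)$ rounds the still-undecided vertices (resp.\ edges) induce connected components of size $\mathrm{poly}(\log n)$ in the residual graph, with high probability.

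Because these LOCAL algorithms are state-congested, i.e., maintain only $O(\log n)$ bits of state per vertex/edge, and because $\Delta \leq n^{\epsilon/16}$ means that every 1-hop fits on a single machine, each LOCAL round can be simulated in $O(1)$ \MPC{} rounds by routing the $O(\log n)$-bit per-edge messages through Lemma~\ref{lem:generalmaxload}. The phase therefore runs in $O(\log \Delta)$ \MPC{} rounds and only $\widetilde{O}(m)$ total space; crucially, we never grow $t$-hops here, we only keep an edge together with its endpoints' current states. Once the shattering phase ends, I turn to the residual subgraph, whose connected components have size at most $c = \mathrm{poly}(\log n)$. The hypothesis $\log n = O(n^{\epsilon/4})$ gives $c \leq n^{\epsilon/2}$, which is exactly the regime where the exponential-growth primitive of Lemma~\ref{lem:collecting} applies with $\beta \leq \epsilon/2$. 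Running it with $t = c$ lets every machine responsible for a residual vertex gather that vertex's entire residual component in $O(\log t) = O(\log \log n)$ \MPC{} rounds and $\widetilde{O}(n) \leq \widetilde{O}(m)$ total space. Each responsible machine then completes an MIS or maximal matching on its component with the trivial sequential greedy algorithm, without any further communication; merging these local outputs with the partial solution computed during Phase~1 yields a maximal independent set (or maximal matching) of $G$. Summing, the total round complexity is $O(\log \Delta) + O(\log \log n) = O(\log \Delta + \log \log n)$ and the total space is $\widetilde{O}(m)$, as claimed.

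\textbf{Main obstacle.} The step I expect to require the most care is invoking the shattering guarantee with parameters that match our setting: we need that after $\Theta(\log \Delta)$ LOCAL rounds the residual components have size at most $\mathrm{poly}(\log n)$ w.h.p., and that the residual subgraph's $c$-hops stay within Lemma~\ref{lem:collecting}'s $n^{\epsilon/2}$ budget. The former is classical and used as a black box from the LOCAL literature; the latter is precisely why the lemma imposes $\log n = O(n^{\epsilon/4})$, since then $\mathrm{poly}(\log n) \leq n^{\epsilon/2}$. A secondary, more routine, point is ensuring that the random tapes and $O(\log n)$-bit states used by the LOCAL algorithm are generated once and stored consistently with each vertex across the machines that see it during the direct simulation; this is immediate since per-vertex state is $O(\log n)$ bits and total per-round state is $\widetilde{O}(m)$.
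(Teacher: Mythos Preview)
Your plan has a genuine gap in the total-space accounting, rooted in an incorrect statement of the shattering bound.

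The classical shattering guarantee from \cite{DBLP:conf/soda/Ghaffari16} and \cite{DBLP:journals/jacm/BarenboimEPS16} is that after $\Theta(\log \Delta)$ rounds the residual components have size at most $\Delta^{O(1)}\cdot\log n$ (the paper quotes $\Delta^4 \log n$), \emph{not} $\mathrm{poly}(\log n)$. Under the hypothesis $\Delta \leq n^{\epsilon/16}$ and $\log n = O(n^{\epsilon/4})$ this is still at most $n^{\epsilon/2}$, so Lemma~\ref{lem:collecting} applies with $n^\beta = \Delta^4\log n$. But then the total space it uses is $O(n^{1+2\beta}) = O\big(n\cdot(\Delta^4\log n)^2\big)$, which for a sparse graph with $m = \Theta(n)$ and, say, $\Delta = n^{\epsilon/16}$ is $n^{1+\Theta(\epsilon)}$---well above the claimed $\widetilde{O}(m)$. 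Your assertion of ``$\widetilde{O}(n)$ total space'' for this step is therefore unjustified once the correct component-size bound is used.

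This is precisely why the paper inserts an additional first step that you omit: it runs $O(\log(\Delta^9\log^2 n)) = O(\log\Delta + \log\log n)$ rounds of Luby \cite{DBLP:conf/stoc/Luby85} (resp.\ Israeli--Itai \cite{DBLP:journals/ipl/IsraelI86}) to drive the number of surviving non-isolated vertices down to $n' \leq n/(\Delta^8\log^2 n)$ before invoking the shattering phase. With only $n'$ vertices remaining, the exponential-growth collection costs $n'\cdot(\Delta^4\log n)^2 = O(n) \leq \widetilde{O}(m)$ total space, closing the gap. Your Phase~1 simulation and Phase~2 collection are otherwise the right ideas, but without this preliminary vertex-reduction step the total-memory bound does not hold.
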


\begin{proof}
	Our first step is to reduce the number of vertices of the graph down to $O(\frac{n}{\Delta^8 \log^2 n})$. To do this, we directly simulate a few rounds of the algorithms of Luby~\cite{DBLP:conf/stoc/Luby85} for MIS and Israeli and Itai~\cite{DBLP:journals/ipl/IsraelI86} for maximal matching. Observe that simulation of each round of these algorithms is straightforward in $O(1)$ rounds of \MPC{} using $\Ot{m}$ total space when the maximum degree is this small and fits the memory of a machine. Both of these algorithms, in each round, reduce the number of edges of the graph by a constant factor in expectation by committing a subset of the vertices to MIS (and removing their neighbors) or by committing a subset of the edges to maximal matching (and removing their incident edges). Therefore, for each of them, it takes only $O(\log(\Delta^9 \log^2 n)) = O(\log \Delta + \log \log n)$ rounds to reduce the number of edges by a factor of $\Delta^9 \log^2 n$ in expectation. Since initially we have at most $n\Delta$ edges in the graph, the remaining graph will have at most $\frac{n\Delta}{\Delta^9 \log^2 n} = \frac{n}{\Delta^8 \log^2 n}$ edges in expectation. Ignoring singleton vertices, the remaining graph cannot have more than $\frac{n}{\Delta^8 \log^2 n}$ vertices in expectation. The success probability can be easily boosted up to high probability by taking $c \log n$ copies of the graph and simulating these algorithms on each instance independently and in parallel and finally choosing the graph whose remaining vertices is the minimum. By a simple application of Chernoff's bound, with probability at least $1-n^{-c}$, the number of vertices is dropped to $\frac{n}{\Delta^8 \log^2 n}$ where we can employ the second part of the algorithm.

	Our second step is to directly simulate $O(\log \Delta)$ rounds of the algorithm of Ghaffari~\cite[Section 3]{DBLP:conf/soda/Ghaffari16} for MIS or Barenboim et al.'s~\cite[Figure~6 -- Phase I]{DBLP:journals/jacm/BarenboimEPS16} for maximal matching.  Both algorithms are also very message efficient and it is also straightforward to simulate each round of them in $O(1)$ rounds of \MPC{} when max degree fits the memory per machine. These algorithms {\em shatter} the graph into smaller connected components of size at most $\Delta^4 \cdot \log n$ by committing a subset of the vertices/edges to MIS/maximal matching. See \cite[Lemma~4.2]{DBLP:conf/soda/Ghaffari16} and \cite[Lemma~4.3]{DBLP:journals/jacm/BarenboimEPS16} for the proof. Since we assume $\Delta \leq n^{\epsilon/10}$ and $\log n = O(n^{\epsilon/4})$, we have $\Delta^4 \cdot \log n \leq n^{\epsilon / 4} \cdot \log n \leq O(n^{\epsilon/2})$ which is substantially smaller than the memory per machine. The diameter of these components also cannot exceed their size. Therefore, we can use Lemma~\ref{lem:collecting} to collect each of these components in a machine in merely $O(\log (\Delta^4\cdot \log n)) = O(\log \Delta + \log \log n)$ rounds. Within a machine, it is then trivial to find MIS/maximal matching  in one round using their corresponding greedy approaches.
	
	It only remains to argue that the total space is only $\Ot{m}$. Recall that Lemma~\ref{lem:collecting}, guarantees that the total space is $\Ot{m' + n'^{1+2\beta}}$ where $n'^{\beta}$ is an upper bound on the size of each component and $n'$ and $m'$ respectively denote the number of vertices and edges of its input graph. Also recall that in the first step of our algorithm, we reduce the vertices by a factor of $\Delta^8\log^2 n$. Therefore we have $n' \leq \frac{n}{\Delta^8 \log^2 n}$. Moreover, the second part of the proof guarantees that the size of no component exceeds $\Delta^4 \log n$. Therefore, overall, collecting the components requires a total space of only
	\begin{equation*}
		\Ot{m' + n'^{1+2\beta}} \leq \widetilde{O}\big(m + n' \cdot (n'^{\beta})^2\big) \leq \widetilde{O}\Big(m + \frac{n}{\Delta^8 \log n} \cdot (\Delta^4 \cdot \log n)^2\Big) \leq \Ot{m + n} \leq \Ot{m}
	\end{equation*}
	as desired. Simulation of the first step requires $O(\log \Delta + \log \log n)$ rounds and simulation of the second step requires $O(\log \Delta)$ rounds. Collecting the components also takes only $O(\log \Delta + \log \log n)$ rounds. Therefore overall the round complexity of the algorithm is $O(\log \Delta + \log \log n)$.
\end{proof}

We note that direct simulation of \local{} algorithms for low-degree graphs would lead to undesirable $n$-dependencies. For instance the algorithm of Ghaffari~\cite{DBLP:conf/soda/Ghaffari16} requires $O(\log \Delta + 2^{\sqrt{\log \log n}})$ rounds for MIS and the algorithm of Barenboim et al.~\cite{DBLP:journals/jacm/BarenboimEPS16} (combined with deterministic maximal matching algorithm of \cite{DBLP:conf/wdag/Fischer17}) requires $O(\log \Delta + \log^3 \log n)$ rounds for maximal matching. We also note that for MIS, a similar approach was used in \cite[Lemma 2.3]{mistreesmpc} to improve over these \local{} bounds for low-degree trees in the \MPC{} model. However, their algorithm requires $O(\log \Delta \cdot \log \log n)$ rounds instead of $O(\log \Delta + \log \log n)$ and uses structural properties of trees to optimize the total memory.

With Lemma~\ref{lem:solvinglowdegree}, if we manage to reduce the maximum degree to $\poly \alpha$, where $\alpha$ is the arboricity of the input graph, then we can solve the problem in $O(\log \alpha + \log \log n)$ rounds. This is the main focus of the forthcoming sections.

\begin{remark}\label{rem:sqrtlogd}
	In an independent paper, Ghaffari and Uitto \cite{sqrtlogd} gave a truly sublinear \MPC{} algorithm for MIS and maximal matching that takes $O(\sqrt{\log \Delta}\cdot \log\log \Delta + \sqrt{\log\log n})$ rounds. While our Lemma~\ref{lem:solvinglowdegree} can be used after degree reduction to imply an $O(\log \alpha + \log^2\log n)$ round algorithm, we can use the result of \cite{sqrtlogd} to slightly improve this bound to $O(\sqrt{\log \alpha}\cdot\log\log \alpha + \log^2\log n)$.
\end{remark}

\subsection{A Partial Degree Reduction Lemma}\label{sec:partialdegreereduction}

Our starting point in this section is (a slightly paraphrased version of) the  degree-reduction algorithm of \cite{DBLP:conf/focs/BarenboimEPS12} for bounded arboricity graphs.

\begin{theorem}[Degree reduction for MIS and maximal matching]\label{thm:barenboim} 
	Let $G=(V, E)$ be a graph with maximum degree $\Delta$ and arboricity $\alpha$ where $\Delta \geq \max\{(5\alpha)^{16}, (5c\log n)^{14}\}$. There exists a low-memory state-congested local algorithm on graph $G$ that takes $\Delta$ as input (i.e., $\Delta$ is initially shared with all nodes) and after $O(\log_{\Delta} n)$ rounds, with probability at least $1-n^{-c}$:
	\begin{enumerate}
		\item Finds an independent set $I \subseteq V$ of $G$ such that each vertex of degree at least $\sqrt{\Delta}$ is either in $I$ or is incident to one vertex in $I$.
		\item Finds a matching $M \subseteq E$ of $G$ such that each vertex of degree at least $\sqrt{\Delta}$ is matched in $M$.
	\end{enumerate}
\end{theorem}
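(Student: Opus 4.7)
My plan is to implement the classical BEPS-style \emph{propose-and-match} (resp.\ \emph{propose-and-dominate}) scheme and then verify both a multiplicative drop in the number of heavy vertices per round and the low-memory state-congested property. Before the first round, I assume access to an out-orientation of $G$ in which every vertex has out-degree at most $2\alpha$; such an orientation exists because $G$ has arboricity $\alpha$, and it can be obtained by iteratively peeling a vertex of minimum degree (which must be $\leq 2\alpha$) and orienting its remaining edges outward. Call a vertex \emph{heavy} if its current degree is at least $\sqrt{\Delta}$ and \emph{light} otherwise. In the matching variant, each round every light vertex picks one uniformly random heavy out-neighbor and proposes to it; each heavy vertex that receives at least one proposal arbitrarily matches one proposer; matched vertices and their incident edges are then removed. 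In the MIS variant I replace the match step by a Luby-style marking step: each light vertex that has a heavy out-neighbor marks itself with probability $1/2$, marked vertices with no marked neighbor enter $I$, and $I$ together with its neighborhood is removed.

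For the analysis I let $H$ be the current heavy set and classify $v \in H$ as \emph{good} if it has at least $\Theta(\alpha \log n)$ light in-neighbors (with respect to the precomputed orientation) and \emph{bad} otherwise. Each light in-neighbor of $v$ proposes to $v$ with probability at least $1/(2\alpha)$, so a good $v$ receives at least $\Omega(\log n)$ proposals in expectation; a Chernoff bound plus a union bound over $|H| \leq n$ gives that every good heavy vertex is matched (resp.\ dominated, after accounting for the constant-probability survival of a marked neighbor against competing marks) with probability $1 - n^{-c}$. A bad $v$, having at most $\Theta(\alpha \log n)$ light in-neighbors and at most $2\alpha$ out-neighbors while $\deg(v) \geq \sqrt{\Delta}$, must carry at least $\sqrt{\Delta}/2$ heavy in-neighbors — this is where the hypotheses $\Delta \geq (5\alpha)^{16}$ and $\Delta \geq (5c\log n)^{14}$ are used to swallow the $\alpha$- and $\log n$-terms. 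But $\sum_{v \in H} |\text{in-heavy}(v)|$ equals the number of edges inside $H$, which by Nash-Williams is at most $\alpha |H|$; so the number of bad vertices is at most $2\alpha|H|/\sqrt{\Delta} \leq \Delta^{-\Omega(1)}|H|$. Hence one round shrinks $|H|$ by a factor $\Delta^{\Omega(1)}$ w.h.p., and $H$ is empty after $O(\log_\Delta n)$ rounds, which is the claimed bound.

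It remains to verify Definition~\ref{def:reclocal} and the low-memory restriction. Each vertex's state is its current heavy/light flag together with its matched/in-$I$ flag, and each edge's state carries its orientation bit; all are $O(\log n)$ bits, and every round needs at most $\poly\log n$ random bits per vertex (to draw the proposal target and the marking coin). The per-round update for a vertex depends only on the states of its incident edges and the states of its 1-hop neighbors — computing the heavy/light flag needs the current degree, choosing a proposal uses the random tape and the out-neighbor flags, and resolving incoming proposals is a simple 1-hop aggregation — and fits within $O(\deg(v) \cdot \poly\log n)$ local space. I expect the main obstacle to be calibrating the constants and the definition of ``bad'' so that the arboricity-based count produces a genuinely \emph{multiplicative} $\Delta^{\Omega(1)}$ drop rather than a constant-factor drop; this is exactly what dictates the polynomial gaps $(5\alpha)^{16}$ and $(5c\log n)^{14}$ in the hypothesis. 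A secondary subtlety is the MIS variant, where ``dominated'' requires a proposing light neighbor to not only propose but also to survive all competing marks, so the Chernoff step on proposals must be carefully composed with the standard constant-probability survival bound for Luby's marking.
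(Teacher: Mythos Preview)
Your proposal rests on a precomputed $2\alpha$-out-orientation of $G$, and this is where it breaks. The theorem asserts a state-congested local algorithm whose only side input is $\Delta$; by Definition~\ref{def:reclocal} the initial edge state is merely the pair of endpoint IDs, so no orientation bit is available at round $0$. Obtaining a $2\alpha$-out-orientation by iterative peeling is sequential, and the standard distributed alternatives (e.g.\ the Barenboim--Elkin $H$-partition) take $\Theta(\log n)$ rounds, which dominates the target $O(\log_\Delta n)$ whenever $\Delta = n^{\Omega(1)}$---exactly the regime in which the theorem is invoked. Your good/bad classification also uses the threshold $\Theta(\alpha\log n)$, so it needs $\alpha$, which is likewise not an input. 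The paper's Algorithm~\ref{alg:degreereduction} avoids both issues: it never touches an orientation or $\alpha$. Instead each exposed (high-degree) vertex trims itself to exactly $\sqrt{\Delta}/2$ low-degree neighbors, and leaves are then filtered against the purely $\Delta$-based threshold $\beta = \Delta^{1/14}$ so that every ``good leaf'' has at most $\beta$ exposed neighbors and $\beta^2$ leaf neighbors. Arboricity enters only implicitly, in the counting argument (cited from \cite{DBLP:conf/focs/BarenboimEPS12}) that bounds how many leaves can fail the $\beta$-test.

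There is a second, independent gap in your MIS variant. You let every light vertex mark with probability $1/2$ and enter $I$ when no neighbor is marked, appealing to a ``constant-probability survival bound.'' But a light vertex may have up to $\sqrt{\Delta}-1$ neighbors, almost all of which can themselves be light and hence marking; the survival probability is then $2^{-\Theta(\sqrt{\Delta})}$, not a constant, so a good heavy vertex is \emph{not} dominated w.h.p. The paper handles this by restricting the random-min competition to good leaves, each of which by construction has fewer than $\beta^2 = \Delta^{1/7}$ competing good-leaf neighbors; the per-leaf join probability is then at least $\Delta^{-1/7}$, and a heavy vertex with $\Delta^{\Omega(1)}$ adjacent good leaves is still dominated w.h.p. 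Your matching analysis is essentially sound \emph{conditional} on the orientation, but without it the whole scheme has no starting point.
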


\begin{tboxalg}{Local degree reduction for MIS and maximal matching~\cite{DBLP:conf/focs/BarenboimEPS12}.}\label{alg:degreereduction}
	\begin{enumerate}[label={(\arabic*)}, topsep=0pt,itemsep=0ex,partopsep=0ex,parsep=1ex, leftmargin=*]
			\item Mark every vertex of degree at least $\sqrt{\Delta}$ as ``high-degree" and other vertices as ``low-degree".
			\item Mark a high-degree vertex as ``exposed" if it has at least $\sqrt{\Delta}/2$ low-degree neighbors.
			\item Each exposed vertex picks exactly $\sqrt{\Delta}/2$ of its low-degree neighbors arbitrarily and discards its other edges. Mark a low-degree vertex as ``leaf" if it is now connected to at least one exposed vertex.
			\item Fix $\beta = \Delta^{1/14}$. Mark a leaf vertex $v$ as ``good" if it satisfies the following two conditions: (1) $v$ is connected to less than $\beta$ exposed vertices, (2) $v$ is connected to less than $\beta^2$ other leaves.
			\item \textbf{For MIS:} Draw a random real in $(0, 1)$ for each good leaf. Each good leaf that holds a local minimum number among its good leaf neighbors joins the MIS. We remove the inclusive neighborhood of each vertex that joins the MIS from the graph.\\\vspace{-0.3cm} \\ 
			\textbf{For maximal matching:} Each good leaf $u$ proposes to one of its exposed neighbors uniformly at random. Each exposed vertex receiving at least one proposal accepts one arbitrarily and gets matched to the proposing vertex. We remove the matched vertices from the graph.
	\end{enumerate}
\end{tboxalg}

\begin{lemma}\label{lem:barenboimisincremental}
	Algorithm~\ref{alg:degreereduction} can be completed in $O(1)$ rounds of a low-memory state-congested local algorithm.
\end{lemma}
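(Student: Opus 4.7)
The plan is to show that each of the five substeps of Algorithm~\ref{alg:degreereduction} can be realized using a constant number of state-congested local rounds, propagating status flags through vertex and edge states. All flags I will carry (``high-degree'', ``exposed'', ``leaf'', ``good'', ``kept-by-$v$'', ``proposes-to-$v$'', etc.) are $O(1)$-bit, so they easily coexist inside an $O(\log n)$-bit state together with a handful of $O(\log n)$-bit counters/thresholds. Every counting or scanning operation I use is a local reduction over a vertex's incident edges, which sits within the $O(\deg(v)\cdot\poly\log n)$ working-memory budget allowed for a vertex update in Definition~\ref{def:reclocal}.

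The basic information-flow primitive I would spell out up front is this: once a status bit $b(v)$ of vertex $v$ is committed to $s_r(v)$, round $r{+}1$ can refresh each incident edge state $s_{r+1}(e)$ to include $b(v)$ (this is allowed since the new edge state may depend on $s_r(v)$, $s_r(u)$, $s_r(e)$), and round $r{+}2$ lets any vertex scan the status of all its neighbors by reading its $\deg(v)$ incident edge states. Using only this primitive plus a degree count in round $1$, step (1) is handled in round $1$ (each vertex counts incident edges and marks itself high/low), step (2) is handled in two more rounds (edges propagate high/low, each high-degree vertex counts low-degree incident edges and sets ``exposed''), and step (4) is handled analogously once ``exposed'' and ``leaf'' have been committed to edge states.

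The main obstacle, and the one place where the $O(\log n)$ state bound genuinely bites, is step (3): an exposed vertex $v$ must commit to a specific subset of $\sqrt{\Delta}/2$ incident low-degree edges, and this commitment must be observable by every neighbor and by every edge, even though $s_r(v)$ is only $O(\log n)$ bits and edge updates see only local information. My plan is a \emph{threshold trick}. During $v$'s update round, $v$ scans all its low-degree incident edges (permitted by the low-memory budget) and stores in $s_r(v)$ a single $O(\log n)$-bit integer $T_v$, defined as the $(\sqrt{\Delta}/2)$-th smallest ID among $v$'s low-degree neighbors. In the next round, the update rule at an edge $e=(v,u)$ sees $s_r(v)$ (hence $T_v$ and the ``exposed'' bit), $s_r(u)$ (hence the ``low-degree'' bit), and the endpoint IDs (present in $s_0(e)$ and therefore maintainable in $s_r(e)$). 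The edge then independently and consistently marks itself ``kept by $v$'' iff $v$ is exposed, $u$ is low-degree, and $\text{ID}(u)\le T_v$. This deterministic tie-break is a valid realization of the ``arbitrary'' choice in Algorithm~\ref{alg:degreereduction}, and a subsequent vertex scan lets each low-degree $u$ detect that it has become a leaf.

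For step (5), one more invocation of the primitive suffices in either variant. In the MIS case, each good leaf draws an $O(\log n)$-bit random key from its tape $\rho(v)$ into its state, edges propagate the keys, and each good leaf declares itself in the MIS iff its key is the minimum among its good-leaf neighbors; a final status propagation implements the ``remove inclusive neighborhood'' deletion. In the matching case, a good leaf uses $\rho(v)$ to pick a uniformly random index among its kept incident edges to exposed neighbors (all visible during the $O(\deg(v)\cdot\poly\log n)$ vertex update) and stores the chosen neighbor's ID in its state; the subsequent edge update writes a ``proposed'' bit on that edge, and each exposed vertex accepts, say, the smallest-ID proposer. Putting all of the above together gives a constant number of state-congested rounds, each staying inside the low-memory budget, which is what Lemma~\ref{lem:barenboimisincremental} asserts.
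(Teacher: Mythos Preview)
Your proposal is correct and follows the same high-level approach as the paper's proof: verify step by step that Algorithm~\ref{alg:degreereduction} fits the state-congested framework of Definition~\ref{def:reclocal}. The paper's argument is terse, essentially asserting that each step has $O(1)$ possible states per vertex/edge, depends only on the $1$-hop, and uses $O(\log n)$ random bits in step~(5).

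Your write-up is considerably more careful on one point the paper glosses over. In step~(3), an exposed vertex must designate \emph{which} $\sqrt{\Delta}/2$ of its low-degree incident edges survive, and this choice must become visible to each edge despite the $O(\log n)$-bit state cap. The paper's proof does not spell out any mechanism here; your threshold trick (store the $(\sqrt{\Delta}/2)$-th smallest low-degree neighbor ID $T_v$ in $s_r(v)$, and let each edge mark itself ``kept'' by comparing $\mathrm{ID}(u)$ against $T_v$) is a clean and valid realization of the ``arbitrary'' choice. The analogous smallest-ID rule you use for the accept step in the matching variant of~(5) fills the same gap. So your argument is not a different route so much as a fleshed-out version of the paper's sketch, and the extra detail is genuinely useful.
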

\begin{proof}
In each step of the algorithm, there are constant possibilities for the state of the edges and the vertices. Also, the random bits just appear in step (5), and each vertex has \O{\log n} random bits. Moreover, it is easy to see that the state of vertices and edges in each step is just based on the states and random bits of their 1-hop. 
 Considering that each step can be simulated in constant rounds of a local algorithm, this algorithm can be completed in $O(1)$ rounds of a state-congested local algorithm. It is also low-memory since state of any vertex in each round can be updated in a space of size \O{\deg(v) · \poly \log n} and for each edge we need space of \O{\poly \log n} bits.
\end{proof}

The following lemma was proved in Theorem 7.2 of \cite{DBLP:conf/focs/BarenboimEPS12}.

\begin{lemma}[\cite{DBLP:conf/focs/BarenboimEPS12}]\label{lem:degreereductionworks}
	Calling Algorithm~\ref{alg:degreereduction} on a graph with maximum degree $\Delta$ and arboricity $\alpha$ where $\Delta \geq \max\{(5\alpha)^{16}, (5c\log n)^{14}\}$ removes $\Delta^{\Omega(1)}$ fraction of its high-degree vertices with probability at least $1-n^{-c}$.
\end{lemma}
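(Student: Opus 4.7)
My plan is to prove this via three steps: (i) use the arboricity bound to show that nearly all high-degree vertices become ``exposed'' in step~(2); (ii) again by arboricity, show that most exposed vertices have many ``good'' leaf neighbors surviving step~(4); and (iii) combine the randomness in step~(5) with a dependency-aware concentration argument to conclude that only a $\Delta^{-\Omega(1)}$ fraction of the high-degree set $H$ survives, which is what the lemma asserts (interpreting ``removes $\Delta^{\Omega(1)}$ fraction'' as reducing $|H|$ by a factor of $\Delta^{\Omega(1)}$, matching the $O(\log_{\Delta} n)$ round complexity in Theorem~\ref{thm:barenboim}).

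For step~(i), any non-exposed high-degree vertex must have more than $\sqrt{\Delta}/2$ high-degree neighbors, so the induced subgraph $G[H]$ would contain many edges with both endpoints high-degree. But $G[H]$ has arboricity at most $\alpha$ and hence at most $\alpha|H|$ edges; a degree-sum argument then gives at most $O(\alpha|H|/\sqrt{\Delta})$ non-exposed vertices, which is $|H|/\Delta^{\Omega(1)}$ using $\alpha \leq \Delta^{1/16}/5$. For step~(ii), I would split bad (non-good) leaves into type~T1 (at least $\beta$ exposed neighbors) and type~T2 (at least $\beta^2$ leaf neighbors). Applying the arboricity bound to the subgraph on $H_{\mathrm{exp}} \cup T_1$, and using that each T1 leaf contributes $\geq \beta$ edges to the $H_{\mathrm{exp}}$--$T_1$ bipartite graph, yields an $O(\alpha)$ average of T1-bad leaves per exposed. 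For T2, the fact that every leaf is attached to some exposed vertex gives $L \leq |H_{\mathrm{exp}}|\sqrt{\Delta}/2$; bounding $|T_2|$ by arboricity on the leaf subgraph and then bounding $H_{\mathrm{exp}}$-to-$T_2$ edges by arboricity on $H_{\mathrm{exp}} \cup T_2$ gives an average of $O(\alpha^2 \sqrt{\Delta}/\beta^2)$ T2-bad neighbors per exposed. For $\beta = \Delta^{1/14}$ and $\alpha \leq \Delta^{1/16}$ both averages are a $\Delta^{-\Omega(1)}$ fraction of $\sqrt{\Delta}/2$, so Markov's inequality implies all but a $\Delta^{-\Omega(1)}$ fraction of exposed vertices have at least $k = \Omega(\sqrt{\Delta})$ good leaf neighbors.

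For step~(iii), consider one such exposed $v$ with $k = \Omega(\sqrt{\Delta})$ good leaf neighbors. In the MM case each good leaf has $<\beta$ exposed neighbors and proposes to one uniformly, and these $k$ proposals are mutually independent across leaves, so the probability $v$ receives no proposal is at most $(1-1/\beta)^k \leq \exp(-\Omega(k/\beta)) = \exp(-\Delta^{\Omega(1)})$. In the MIS case, I would extract a subset $I_v$ of $v$'s good leaves that are pairwise at distance $\geq 3$ in the graph; since each good leaf has $<\beta^2$ good-leaf neighbors, the conflict graph (two good leaves of $v$ conflict if they share a good-leaf neighbor or are adjacent) has maximum degree $O(\beta^4)$, so $|I_v| \geq k/\beta^4$. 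The events ``$u$ joins the MIS'' for $u \in I_v$ depend on disjoint sets of random values and are therefore mutually independent, each with probability $\geq 1/(\beta^2+1)$, so the probability that $v$ survives (none of $I_v$ joins) is at most $\exp(-|I_v|/\beta^2) \leq \exp(-k/\beta^6) = \exp(-\Delta^{\Omega(1)})$.

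The main obstacle, as I see it, is lifting these per-vertex guarantees to a global high-probability statement, because the survival events at distinct exposed vertices remain correlated through shared good-leaf neighbors. I plan to handle this by noting that each leaf is adjacent to at most $\beta$ exposed vertices, so the dependency degree of any single survival event is at most $O(\sqrt{\Delta}\cdot \beta) = \Delta^{O(1)}$; combined with the per-event failure probability $\exp(-\Delta^{\Omega(1)})$, the Lov\'asz Local Lemma (or a read-$k$ Chernoff bound) converts this into a $1 - n^{-c}$ success bound, with the hypothesis $\Delta \geq (5c\log n)^{14}$ supplying exactly the slack $\log n \leq \beta/5$ needed to dominate the polynomial error in $n$. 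Composing the three error contributions---non-exposed vertices, exposed vertices with too few good leaves, and randomized survival---yields a final surviving fraction of $|H|/\Delta^{\Omega(1)}$ with probability at least $1-n^{-c}$, as claimed.
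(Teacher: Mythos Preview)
The paper does not prove this lemma itself; it simply cites Theorem~7.2 of Barenboim, Elkin, Pettie, and Schneider. Your three-step outline reproduces that argument and steps~(i)--(ii), as well as the per-vertex analysis in step~(iii), are correct.

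The one genuine misstep is your last paragraph. Once you have established that every ``good-heavy'' exposed vertex $v$ individually survives the random step with probability at most $\exp(-\Delta^{\Omega(1)})$, you are finished by a plain union bound over the at most $n$ such vertices; a union bound needs no independence, so your concern about correlation across different $v$'s is misplaced. The hypothesis $\Delta \ge (5c\log n)^{14}$ gives $\Delta^{1/14}\ge 5c\log n$, which is exactly what makes each per-vertex failure probability at most $n^{-(c+1)}$ (this is the binding case, coming from your MIS exponent $k/\beta^{6}=\Theta(\Delta^{1/14})$; in the MM case the exponent is $\Theta(\Delta^{3/7})$, far larger). Invoking the Lov\'asz Local Lemma here is not just overkill but actually wrong as stated: LLL certifies that the good outcome occurs with \emph{positive} probability, not with probability $1-n^{-c}$. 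A read-$k$ Chernoff bound could be made to work but is unnecessary. One minor caution: in the MIS branch the exponent $k/\beta^{6}$ lands at $\Theta(\Delta^{1/14})$ with hidden constants coming from the greedy extraction of $I_v$ (conflict degree $\le 2\beta^{4}$) and from $k\ge \sqrt{\Delta}/4$; you should track these to confirm that the slack provided by the factor~$5$ in $(5c\log n)^{14}$ absorbs them.
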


Indeed the two lemmas above are sufficient to prove Theorem~\ref{thm:barenboim}.

\begin{proof}[Proof of Theorem~\ref{thm:barenboim}]
	It suffices to iteratively run Algorithm~\ref{alg:degreereduction}. Since each round, by Lemma~\ref{lem:degreereductionworks}, removes $\Delta^{\Omega(1)}$ high-degree vertices, it suffices to run it for only $O(\log_{\Delta}n)$ rounds to remove all high-degree vertices with high probability. Moreover, we showed that each call to Algorithm~\ref{alg:degreereduction} can be completed in $O(1)$ rounds of a state-congested local algorithm; thus, overall, it takes only $O(\log_{\Delta}n)$ rounds of a state-congested local algorithm to remove all high-degree vertices.
\end{proof}

We further show that it takes only $O(1)$ rounds to reduce maximum degree down to $O(n^\epsilon)$ so that it fits the memory per machine.

\begin{lemma} \label{lem:MPC-degree-reduction}
Given a graph $G$ and any desirably small constant $\epsilon \in (0,1)$, 
there exists an algorithm that in \O{1} rounds of an MPC algorithm decreases the maximum degree of the graph to $\O{n^\epsilon}$ using \O{n^\epsilon} space per machine and \Ot{m} total space.
\end{lemma}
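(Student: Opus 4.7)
The strategy is to apply Theorem~\ref{thm:barenboim} in $O(\log(1/\epsilon)) = O(1)$ iterated phases, where phase $i$ square-roots the current maximum degree: starting from $\Delta_0 \le n$ we obtain $\Delta_i \le n^{1/2^i}$ after $i$ phases, so reaching $\Delta \le n^\epsilon$ requires only $k = O(\log(1/\epsilon))$ phases. By Theorem~\ref{thm:barenboim}, phase $i$ consists of $O(\log_{\Delta_{i-1}} n) = O(2^{i-1})$ rounds of a state-congested local algorithm; in each phase we commit the produced independent set (respectively matching) to the partial solution and delete the corresponding vertices from the graph, so the residual graph's maximum degree drops to at most $\sqrt{\Delta_{i-1}}$. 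Summing over phases gives a total of $\sum_{i=1}^{k} O(2^{i-1}) = O(1/\epsilon) = O(1)$ state-congested rounds.

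The remaining task is to simulate one round of Algorithm~\ref{alg:degreereduction} in $O(1)$ rounds of \MPC{} using $\widetilde{O}(m)$ total memory. The obstacle is that we cannot invoke Lemma~\ref{lem:blindcoordination} here, because $\Delta$ may be much larger than $n^\epsilon$ throughout the early phases. Instead, I will implement each of the five steps of Algorithm~\ref{alg:degreereduction} individually using Lemma~\ref{lem:generalmaxload}, since every quantity involved is a separable function aggregated over a $1$-hop neighborhood: the degree itself (step~1), the number of low-degree neighbors (step~2), the number of exposed and of leaf neighbors (step~4), the local minimum of the random reals of good-leaf neighbors for MIS (step~5), and, for maximal matching, an arbitrarily chosen accepting proposer (step~5). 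Each such aggregation takes $O(1)$ \MPC{} rounds with $O(n^\epsilon)$ space per machine and $\widetilde{O}(m)$ total space.

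The one step that is not directly a separable function is step~(3), in which each exposed vertex $u$ must pick $\sqrt{\Delta}/2$ of its low-degree neighbors arbitrarily and discard its remaining incident edges; when $\Delta = \omega(n^\epsilon)$ the edge list of $u$ does not fit on a single machine. The hardest part of the proof is to handle this carefully: the plan is to sort all edges $(u,v)$ with $u$ exposed and $v$ low-degree lexicographically by $(u,v)$, which is a standard \MPC{} primitive taking $O(1)$ rounds for constant $\epsilon$ with $\widetilde{O}(m)$ total space. After sorting, each such edge learns its rank among the edges incident to its exposed endpoint, and $u$ retains exactly those edges whose rank is at most $\sqrt{\Delta}/2$. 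Composing this with the aggregation-based simulation of the other steps, each state-congested local round is compressed into $O(1)$ \MPC{} rounds; combined with the bound on the number of state-congested rounds above, the entire degree-reduction completes within $O(1)$ \MPC{} rounds, yielding the claimed memory and round bounds.
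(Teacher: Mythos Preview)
Your proposal is correct and follows essentially the same approach as the paper: both argue that $O(1/\epsilon)=O(1)$ invocations of Algorithm~\ref{alg:degreereduction} suffice to bring the maximum degree down to $O(n^\epsilon)$, and that each invocation can be simulated in $O(1)$ \MPC{} rounds by computing the needed $1$-hop aggregates via Lemma~\ref{lem:generalmaxload}. The only notable difference is your treatment of step~(3), where you use a sorting primitive to let each exposed vertex retain exactly $\sqrt{\Delta}/2$ low-degree neighbors, whereas the paper simply remarks that step~(3) can be handled ``using a similar approach'' to the separable-function steps; both solutions are valid.
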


\begin{proof}
By Lemma~\ref{lem:degreereductionworks}, we can reduce the maximum degree of the graph to $\O{n^\epsilon}$ by running the Algorithm~\ref{alg:degreereduction} for $O(\log_{n^\epsilon}n) = \O{1}$ time. We just need to show that it is possible to simulate this algorithm is \O{1} rounds of an MPC algorithm using \O{n^\epsilon} space per machine and \Ot{m} total space. Observe that steps (1), (2), (4) and (5) of this algorithm can be simply simulated as separable functions. Therefore, by lemma~\ref{lem:generalmaxload}, it is possible to compute them in \O{1} rounds of MPC using $\O{n^\epsilon}$ per machine and \Ot{m} total space. One can verify that using a similar approach each high-degree vertex can remove all but $\sqrt{\Delta}/2$ of its edges that are connected to low-degree vertices in \O{1} rounds. As a result, the whole algorithm can be simulated in \O{1} rounds of an MPC algorithm using 	$\O{n^\epsilon}$ space per machine and $\Ot{m}$ total space.
\end{proof}

\subsection{Warm-Up: A Simple Algorithm with Inefficient Total Space}\label{sec:degreereductioninefficient}

\begin{theorem}\label{thm:mainbadspace}
	For any given graph $G=(V, E)$ of arboricity $\alpha$, and for any desirably small $\epsilon \in (0,  1)$, there exists an algorithm that with high probability computes a maximal independent set (or maximal matching) of $G$ in $O(\log \alpha + \log^2 \log n)$ rounds of \MPC{} using $O(n^{\epsilon})$ space per machine and $O(m+n^{1+\epsilon/3})$ total memory. The algorithm does not require to know $\alpha$.
\end{theorem}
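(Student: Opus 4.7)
The plan is to interleave the BEPS degree reduction of Theorem~\ref{thm:barenboim} with the blind coordination machinery of Lemma~\ref{lem:blindcoordination} in a sequence of phases, each of which halves $\log \Delta$ by reducing the maximum degree from $\Delta$ to $\sqrt{\Delta}$. First I apply Lemma~\ref{lem:MPC-degree-reduction} (with parameter $\epsilon/2$) to bring the maximum degree down to $O(n^{\epsilon/2})$ in $O(1)$ \MPC{} rounds using total space $\widetilde{O}(m)$; from this point on, every intermediate graph has maximum degree at most $n^{\epsilon/2}$, so blind coordination is applicable throughout.

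Now I run phases $i=1,2,\dots$, where at the start of phase $i$ the current maximum degree is $\Delta_{i-1}$. By Theorem~\ref{thm:barenboim}, combined with Lemma~\ref{lem:barenboimisincremental} which confirms that Algorithm~\ref{alg:degreereduction} fits the low-memory state-congested template, one can reduce $\Delta_{i-1}$ to $\sqrt{\Delta_{i-1}}$ in $r_i=O(\log_{\Delta_{i-1}} n)$ rounds of a low-memory state-congested local algorithm. Since $\Delta_{i-1} \leq n^{\epsilon/2}$, invoking Lemma~\ref{lem:blindcoordination} with parameter $\epsilon/2$ compresses these $r_i$ rounds into
\[
O\!\left(\frac{r_i}{\log_{\Delta_{i-1}} n} + \log\log_{\Delta_{i-1}} n\right) = O(\log \log n)
\]
\MPC{} rounds, using $O(n^{\epsilon/2}) \leq O(n^\epsilon)$ space per machine and total space $O(m+n^{1+\epsilon/3})$. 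Because $\log \Delta_i = \tfrac{1}{2}\log \Delta_{i-1}$, starting from $\Delta_0 \leq n^{\epsilon/2}$ at most $O(\log \log n)$ phases are needed before $\Delta$ drops below any fixed $\poly(\alpha, \log n)$ threshold, so collectively the phases consume $O(\log^2 \log n)$ \MPC{} rounds.

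Since $\alpha$ is not known a priori, I terminate the loop early whenever a phase fails to square-root the maximum degree; this check is executable in $O(1)$ \MPC{} rounds via Lemma~\ref{lem:generalmaxload} (max is separable). By Lemma~\ref{lem:degreereductionworks}, a phase can fail with non-negligible probability only when the starting max degree is already below $\tau := \max\{(5\alpha)^{16}, (5c\log n)^{14}\} = \poly(\alpha, \log n)$, so in either case the loop terminates with a residual graph of max degree at most $\poly(\alpha,\log n)$. I then finish up by invoking Lemma~\ref{lem:solvinglowdegree} on the residual graph, which takes $O(\log \Delta + \log \log n) = O(\log \alpha + \log \log n)$ rounds with total space $\widetilde{O}(m)$, provided $\poly(\alpha,\log n) \leq n^{\epsilon/16}$. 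In the complementary regime where $\alpha$ is so large that this fails, we have $\log \alpha = \Omega(\log n)$, and I can simply simulate Luby or Israeli--Itai directly, each round compressed into $O(1)$ \MPC{} rounds via Lemma~\ref{lem:generalmaxload}, for $O(\log n) = O(\log \alpha)$ rounds in total. Summing the three stages yields $O(\log \alpha + \log^2 \log n)$ rounds and total space $O(m+n^{1+\epsilon/3})$, matching the claim.

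The main obstacle I expect is the adaptive termination that removes the need to know $\alpha$: one must argue that the cheap failure test truly suffices, in the sense that every possible way the loop can exit---either by hitting the prescribed $O(\log \log n)$-phase budget or by a detected failure---leaves the residual graph in a regime where one of the two finish-up routes applies. The key is Lemma~\ref{lem:degreereductionworks}, which pins the conditional distribution of $\Delta$ upon failure to $\poly(\alpha,\log n)$ with high probability. The rest---checking that Algorithm~\ref{alg:degreereduction} is indeed low-memory state-congested, tallying the memory budget across the $O(\log\log n)$ blind coordination calls, and verifying the dominant cost comes from the phases rather than finish-up---is essentially bookkeeping.
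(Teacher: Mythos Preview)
Your proposal is correct and follows essentially the same approach as the paper: iterate $O(\log\log n)$ phases that each apply the BEPS degree-reduction (Theorem~\ref{thm:barenboim}) compressed via blind coordination (Lemma~\ref{lem:blindcoordination}) to halve $\log\Delta$, use a separable-max check to terminate once a phase fails (which by Lemma~\ref{lem:degreereductionworks} certifies $\Delta\le\poly(\alpha,\log n)$), and finish with Lemma~\ref{lem:solvinglowdegree}. Your explicit use of parameter $\epsilon/2$ in Lemmas~\ref{lem:MPC-degree-reduction} and~\ref{lem:blindcoordination} to hit the stated $O(m+n^{1+\epsilon/3})$ total-space bound, and your separate handling of the large-$\alpha$ regime where $\poly(\alpha,\log n)>n^{\epsilon/16}$, are details the paper glosses over but that you treat correctly.
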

\begin{proof}

Fix a sufficiently large threshold $\tau = \alpha^{O(1)} + \log^{O(1)} n$. First observe that if $ \Delta \leq \tau$, then we have $\log \Delta = O(\log \alpha + \log \log n)$ and, thus, we can use the algorithm of Lemma~\ref{lem:solvinglowdegree} to solve the problem in $O(\log \Delta + \log \log n) = O(\log \alpha + \log \log n)$ rounds or as described in Remark~\ref{rem:sqrtlogd} to $\widetilde{O}(\sqrt{\log \alpha})$. Therefore, one challenge is to reduce the maximum degree to $\tau$.

The algorithm that we use for this consists of $O(\log \log n)$ {\em phases} (not rounds). Let us denote by $\Delta_i$ the maximum degree of the graph at the start of phase $i$ of the algorithm. The goal is to ensure that in any phase $i$ where $\Delta_i > \tau$, we reduce the maximum degree substantially and get $\Delta_{i+1} \leq \sqrt{\Delta_i}$. Observe, at first, that having this implies that it takes only $O(\log \log n)$ phases to reduce the maximum degree to $\tau$, since otherwise we have
\begin{equation*}
\Delta_{\log \log n} \leq n^{1/2^{\log \log n}} = n^{1/\log n} \leq O(1).
\end{equation*}
To achieve the goal of reducing maximum degree from $\Delta$ to $\sqrt{\Delta}$ by the end of each phase, we employ Algorithm~\ref{alg:degreereduction} which precisely guarantees this by Theorem~\ref{thm:barenboim}. Note that our algorithm is not given the arboricity $\alpha$ of the graph and, thus, we do not know the value of $\tau$ and cannot check whether $\Delta < \tau$. However, if Algorithm~\ref{alg:degreereduction} fails, which we are able to check by computing the maximum degree of the remaining graph, we can be sure that $\Delta < \tau$. Unfortunately, direct simulation of Theorem~\ref{thm:barenboim} is infeasible as it takes up to $O(\log_\Delta n)$ rounds which gets close to $O(\log n)$ as $\Delta$ gets smaller and smaller. To resolve this, we use the blind coordination lemma (Lemma~\ref{lem:blindcoordination}) to compress multiple rounds of Algorithm~\ref{alg:degreereduction} in a few rounds of MPC.
Recall that by Lemma~\ref{lem:blindcoordination}, if $\Delta \leq n^\epsilon$, it takes only $O(\frac{r}{\log_\Delta n} + \log \log_\Delta n)$ rounds to run $r$ rounds of any low-memory state-congested local algorithm with a low-memory \MPC{} algorithm and by Lemma~\ref{lem:MPC-degree-reduction}, is possible to decrease the maximum degree to $n^\epsilon$ in constant rounds of MPC. Therefore, after decreasing $\Delta$ to $n^\epsilon$, we run $O(\log_\Delta n)$ rounds of Algorithm~\ref{alg:degreereduction}, which we proved is a low-memory state-congested algorithm in Lemma~\ref{lem:barenboimisincremental}. Note that, this process takes only $O(\log \log_\Delta n)$ rounds of MPC. Overall, since we have $O(\log \log n)$ phases each taking $O(\log \log_\Delta n)$ rounds, it takes only $O(\log^2 \log n)$ rounds to reduce maximum degree to $\tau$. Therefore, the final running time of the algorithm is $O(\sqrt{\log \alpha} \cdot \log\log \alpha + \log^2 \log n)$.\end{proof}

\subsection{The Main Algorithm}\label{sec:mainalgorithm}

Observe that although the local space of each machine in Theorem~\ref{thm:mainbadspace} is only $O(n^\epsilon)$, the aggregated space over all machines is $n^{1+\Omega(\epsilon)}$ which may be much larger than the optimal total space of $\Ot{m}$ that suffices to store the original input. In this section, we resolve this shortcoming by modifying our algorithm to achieve an optimal total space of $\Ot{m}$. We note that these modifications, remarkably, do not lead to any blow-up in the round complexity of the algorithm. 

\restate{Theorem~\ref{thm:main}}{
	\mainthm{}
}

The main reason that our algorithm for Theorem~\ref{thm:mainbadspace} requires a total space of at least $n^{1+\Omega(\epsilon)}$ is the blind coordination lemma. This blow-up in total space comes from the fact that for {\em each} vertex, we collect its neighborhood of size up to $n^{\Omega(\epsilon)}$ in the machine that is responsible for it. Therefore, inevitably, we need a total space of $n^{1+\Omega(\epsilon)}$ to store these neighborhoods for all the vertices. To alleviate this, we exploit several structural properties of Algorithm~\ref{alg:degreereduction} to  employ the blind coordination procedure on only a carefully picked subset of the vertices that we dynamically update over different rounds/phases of the algorithm.

Recall that our algorithm for Theorem~\ref{thm:mainbadspace} is composed of $O(\log \log n)$ phases that in turn reduce the maximum degree from $\Delta$ to $\sqrt{\Delta}$ (where $\Delta$ is the maximum degree in the remaining graph by the end of the previous phase). Indeed the only part of the algorithm that requires the blind coordination lemma and, thus, a total space of $n^{1+\Omega(\epsilon)}$ is completing each of these phases. Therefore, to reduce the total space to $O(m)$, it suffices to prove the following lemma.

\begin{lemma}
	For any graph $G$ with maximum degree $\Delta$ and arboricity $\alpha$, and for any desirably small constant $\epsilon \in  (0, 1)$, there exists an algorithm that finds with high probability, an independent set $I$ (resp. a matching $M$) of $G$ in $O(\log \log n)$ rounds of \MPC{} with $O(n^{\epsilon})$ space per machine and $\widetilde{O}(m)$ total memory, such that the maximum degree of $G[V\setminus I]$ (resp. $G[V\setminus V(M)]$) is at most $\max\{\sqrt{\Delta}, \alpha^{O(1)} + \log^{O(1)} n\}$.
\end{lemma}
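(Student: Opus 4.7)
The plan is to simulate the $R = O(\log_\Delta n)$ iterations of Algorithm~\ref{alg:degreereduction} in $O(\log\log n)$ MPC rounds while keeping total memory at $\widetilde O(m)$, in contrast to the $\widetilde O(n^{1+2\epsilon/3})$ incurred by a direct application of Lemma~\ref{lem:blindcoordination}. Blind coordination is wasteful because it materialises an $n^{\Omega(\epsilon)}$-sized neighborhood around \emph{every} vertex; I will instead grow neighborhoods only around the geometrically shrinking set of currently high-degree vertices, tuning the growth rate so that aggregate memory never exceeds $\widetilde O(m)$. As a preprocessing step, apply Lemma~\ref{lem:MPC-degree-reduction} in $O(1)$ rounds and $\widetilde O(m)$ memory to reduce $\Delta$ to at most $n^{\epsilon'}$ for a small $\epsilon' < \epsilon$, so that the hops I later collect comfortably fit in a single machine.

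Two structural observations drive the argument. First, Lemma~\ref{lem:degreereductionworks} already gives w.h.p.\ that after $\ell$ iterations of Algorithm~\ref{alg:degreereduction} at most $H_\ell := n/\Delta^{c\ell}$ vertices remain high-degree, for an absolute constant $c>0$. Second, inspecting Algorithm~\ref{alg:degreereduction}, every vertex that \emph{acts} during the next $k$ iterations (gets marked exposed, becomes a leaf, joins $I$, gets matched, or is removed) lies within two hops of some currently high-degree vertex; a low-degree vertex with no high-degree vertex inside its $O(k)$-hop is completely inert for $k$ iterations. Hence, for a batch of $k$ iterations it suffices to materialise the $O(k)$-hops of the currently high-degree vertices only; every other vertex keeps its trivial state.

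The simulation then proceeds in $K$ adaptive batches. Writing $r_j := \sum_{i<j} t_i$ for the iterations already completed before batch $j$, the batch performs three steps: (a) for each vertex $v$ still high-degree, incrementally extend the stored neighborhood around $v$ from radius $r_j$ to radius $r_j + t_j$ using doubling-style exponential growth as in Lemma~\ref{lem:collecting}; (b) on the machine responsible for $v$, locally simulate $t_j$ further iterations of Algorithm~\ref{alg:degreereduction}, which by Observation~\ref{obs:linial} yields the correct state of $v$ at round $r_j + t_j$; (c) machines exchange updated states together with the set of vertices newly committed to $I$/$M$ and newly removed. Choose $t_j$ as large as possible subject to $H_{r_j}\cdot \Delta^{O(t_j)} = \widetilde O(m)$; using $m \le \alpha n$ this reads $t_j \le c\, r_j + O(\log_\Delta\alpha) + O(\log_\Delta\log n)$. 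Substituting yields the recurrence $r_{j+1} \le (1+c)\, r_j + O(\log_\Delta\alpha) + O(\log_\Delta\log n)$, which grows geometrically, so $r_K \ge R$ after $K = O(\log\log n)$ batches. The first batch costs $O(\log t_1) = O(\log\log n)$ communication rounds for its initial neighborhood collection, while each subsequent batch costs only $O(1)$ rounds for extension, local simulation, and state-sharing, for a grand total of $O(\log\log n)$ MPC rounds.

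The main obstacle is the interplay between the per-machine and the total-memory budgets: the per-machine constraint forces $\Delta^{t_j} \le O(n^\epsilon)$, which can bind in late batches when $t_j$ would otherwise grow toward $R$. I handle this by additionally capping $t_j$ at $(\epsilon/2)\log_\Delta n$; because each capped batch still advances $r_j$ by this additive amount, only $O(1/\epsilon) = O(1)$ capped batches are ever required and the asymptotic round count is unchanged. A minor bookkeeping subtlety is that the state of a vertex at distance $d$ from a tracked $v$ is computed correctly only up to round $r_j + t_j - d$; I absorb this by padding hop radii by a small additive constant, which leaves the memory and round analysis intact. Finally, the algorithm does not need to know $\alpha$: simply run the simulation and then measure the resulting maximum degree; if the precondition of Theorem~\ref{thm:barenboim} happened to fail, then $\Delta < \alpha^{O(1)} + \log^{O(1)} n$ already held, so the max-degree conclusion of the lemma is satisfied for the returned $I$ or $M$ in either case.
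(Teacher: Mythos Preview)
Your proposal is correct and follows essentially the same approach as the paper: restrict neighborhood collection to the currently high-degree vertices (the paper frames this via the inert set $B$ and Observation~\ref{obs:lowdegb}), let the per-vertex neighborhood budget grow as the high-degree count shrinks geometrically (the paper's dynamic ``capacity'' sequence $s_0,s_1,\ldots$ plays exactly the role of your $\Delta^{O(t_j)}$), and extend the stored neighborhoods between batches by an adapted doubling that queries only still-high-degree vertices. One small point the paper makes explicit that you gloss over is the bootstrap: it first runs $O(1/\delta)$ iterations of Algorithm~\ref{alg:degreereduction} uncompressed so that the capacity starts at $\Delta$ rather than $O(1)$; with your loose bound $H_0=n$ the constraint $H_{r_0}\cdot\Delta^{O(t_0)}\le\widetilde O(m)$ can force $t_0<1$ when $\Delta\gg\alpha,\log n$, so you should either use the tighter $H_0\le 2m/\sqrt{\Delta}$ or insert the same constant-round bootstrap.
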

\begin{proof}

Define $B := \{v \in V \, | \, \deg(v) < \sqrt{\Delta}, \, \max_{u \in N(v)} \deg(u) < \sqrt{\Delta} \}$ to be the set of low-degree vertices at the start of Algorithm~\ref{alg:degreereduction} that have no high-degree neighbors. Note that we simply assume that $\Delta = \O{n^\epsilon}$ since by Lemma~\ref{lem:MPC-degree-reduction} it is possible to reduce the maximum degree of the graph to $\O{n^\epsilon}$ in \O{1} rounds on MPC. We first note the following property of Algorithm~\ref{alg:degreereduction}.

\begin{observation}\label{obs:lowdegb}
	The output of Algorithm~\ref{alg:degreereduction} on graph $G$ is the same as its output on graph $G[V \setminus B]$.
\end{observation}

Algorithm~\ref{alg:degreereduction} marks each vertex of degree at least $\sqrt{\Delta}$ as high-degree and with $O(\log_\Delta n)$ calls to it, each of the high-degree vertices either gets removed from the graph or its degree drops to less than $\sqrt{\Delta}$. Throughout this process, after each call to Algorithm~\ref{alg:degreereduction}, a subset of high-degree vertices gets removed from the graph. Therefore, a low-degree vertex $v$ that is initially incident to a high-degree vertex $u$ and is thus not in $B$, may join $B$ after $u$ is removed. Upon joining $B$, we change the state of the vertex to ``dead". By Observation~\ref{obs:lowdegb}, a vertex that is marked as dead will have no impact on the outcome of the rest of the algorithm until the degree of every vertex drops down to $\sqrt{\Delta}$. Using this, we first show that it is possible to compress multiple rounds by collecting the neighborhood of only the high-degree vertices. Then, we explain how we manage to store the neighborhood of all the high degree vertices with using $\O{m}$ total memory. The overall idea is that after removing a portion of the high-degree vertices in each phase of the algorithm that, we use this extra space to expand the neighborhood of the remaining ones. Finally ,we show how we actually gather the neighborhood of the vertices by adapting the exponential growth technique. 

\vspace{-0.3cm}
\paragraph{Round compression without collecting the neighborhood of low-degree vertices.}

 Instead of initially collecting the $\Omega(\log_\Delta n)$-hop of {\em every} vertex to compress the phases, which is inefficient in terms of the total space used, we collect the neighborhoods of only the high-degree vertices. Interestingly, if we just collect the $t$-hop of the high-degree vertices, we end up having the $(t-1)$-hop of all the vertices that we care about in at least one machine. By Observation~\ref{obs:lowdegb}, any low-degree vertex that has an {\em impact} on the outcome of the algorithm, is connected to at least one high-degree vertex. Take a low-degree vertex $v$ that is incident to a high-degree $u$. Since we collect the $t$-hop of $u$ in the machine that is responsible for $u$, we also have access to the $(t-1)$-hop of $v$ in that machine. Note that we might have $(t-1)$-hop of some low-degree vertices in more than one machine but it does not cause a problem for us. Now, suppose that we draw for any vertex $\O{\log n}$ random real number in (0, 1), then collect the $t$-hop of every high-degree vertex along their random bits in a machine responsible for it. As a result of this, for any vertex, we have its $(t-1)$-hop in at least one machine. This allows us to compute the correct state of all the vertices after  $r = \Theta(t-1)$ calls to Algorithm~\ref{alg:degreereduction}. The reason is that Algorithm~\ref{alg:degreereduction} uses at most one random number per vertex and by Lemma~\ref{lem:barenboimisincremental}, $r$ calls to this algorithm can be simulated as $\O{r}$ rounds of a low-memory state-congested local algorithm. 
  Also, by Observation~\ref{obs:linial} the state of any vertex after these $\O{r}$ rounds can be computed in one round of \MPC{} given the initial state of its $t'$-hop where $t'=\Theta(r)$.  Note that in this case, the initial state of vertices is their random numbers. Therefore,  having the $t$-hop of high-degree vertices suffices to have the state of all the vertices after $\Theta(t)$ calls to Algorithm~\ref{alg:degreereduction}.

\vspace{-0.3cm}
\paragraph{Handling the high-degree vertices.} 
We showed that  it suffices to collect the neighborhood around only the high-degree vertices to be able to compress multiple rounds of the algorithm. 
However, even storing a neighborhood of size up to $n^{O(\epsilon)}$ for high-degree vertices may require much more than $O(m)$ overall space. To resolve this, we set a capacity $s$ on the size of the neighborhood that we collect for each of the high-degree vertices and update this capacity iteratively. The initial capacity on each of the high-degree vertices is $s_0 = O(1)$. This means that it is initially impossible to compress multiple rounds of the algorithm.  Recall that by Lemma~\ref{lem:degreereductionworks}, each time we call Algorithm~\ref{alg:degreereduction}, with high probability at least $\Delta^{\delta}$ fraction of the high-degree vertices will be removed from the graph for some constant $\delta > 0$. Thus, after $1/\delta$ calls to Algorithm~\ref{alg:degreereduction}, at least $\Delta$ fraction of the high-degree vertices are removed and each high-degree vertex affords to collect its direct neighbors in its machine. We then continue simulating the algorithm for $2/\delta = O(1)$ more rounds in $d = O(1)$ rounds of our \MPC{} algorithm without any compression. This allows us to increase the capacity of the remaining vertices by a factor of $\Delta^2$ while keeping the total capacity of the remaining vertices the same. That is, we have $s_1 = \Delta \cdot \Delta^2$ and now have enough space to collect the 3-hop of all the high-degree vertices in one machine. As a result of this we have 2-hop of any vertex in at least one machine and we can now simulate two times more number of rounds of the algorithm in $d$ rounds of MPC. This reduces the number of remaining high-degree vertices by a factor of $\Delta^4$ and we can increase the capacity by this factor, achieving $s_2 = \Delta^3 \cdot \Delta^4 = \Delta^7$. Now, after collecting the 7-hop of every high-degree vertex, we can run 6 times more number of rounds of the algorithm in $d$ rounds and reduce the high-degree vertices by a factor of $\Delta^{13}$ and get $s_3 = \Delta^7 \cdot \Delta^{13}$. Overall, the capacity is increased double-exponentially in each step and it takes only $O(\log \log_\Delta n)$ rounds to get a capacity that is essentially as large as the space $O(n^{\epsilon})$ of a machine. 

\vspace{-0.3cm}
\paragraph{Adapted exponential growth technique.}
The only missing part of the proof is about how we gather the $t$-hop of high-degree vertices in a single machine and how we expand it in each iteration of the algorithm. Needless to mention that we are not concerned about the low-degree vertices that are not connected to any high-degree vertex; therefore, after any iteration of the algorithm we remove all such vertices from all the machines. Note that simply using Lemma~\ref{lem:collecting} in each iteration is not efficient since it collects the neighborhood of all the vertices. It also takes $\log{ \log (t)}$ rounds to gather the $t$-hop of even a single vertex in one machine but we expect each iteration of our algorithm to take \O{1} rounds. To overcome these issues we take the exponential growth algorithm of Lemma~\ref{lem:collecting} and adapt it to our needs. Roughly speaking, we claim that if we just call this algorithm on high degree vertices we can get $t$-hop of them in $\log \log(t)$ rounds. Assume that the $t$-hop of all high-degree vertices is gathered in the machine responsible for them. For any high-degree vertex $v$ we just send requests to gather the neighborhood of the high-degree vertices in $t$-hop of $v$ in the machine responsible for that. This gives us the $(2t-2)$-hop of vertex $v$ assuming that it does not violate the memory limits. The reason is that for any low-degree vertex $u$ in $t$-hop of $v$ there is also at least a high-degree vertex $u'$ in $t$-hop of $v$ where the distance between $u$ and $u'$ is at most two. Without loss of generality, we assume that $t > 3$ otherwise we simply gather the $t$-hop in $t$ rounds. Observe that if for any high-degree vertex we set its $3$-hop as its initial neighborhood, using the mentioned algorithm in $\log\log(t)$ rounds we gather the $t$-hop of high-degree vertices in the machine responsible for them. Also to decrease the number of rounds in each iteration we simply start gathering the neighborhood of vertices from where we left off in the previous iteration. Assume that we had the $t'$-hop of all the high-degree vertices in the previous integration. After that iteration is completed, we remove the neighborhood of all the vertices that are no longer high-degree and for those that are still-high degree we send requests to the high-degree vertices in their neighborhood. As a result we get the $(2t'-2)$-hop of all the remaining high-degree vertices. Note that in any iteration this algorithm does not need space more than $\O{m}$ since the size of the neighborhood of each high-degree vertex that we gather in its machine increases double exponentially and so does its budget. However, it is possible that the extra space that we get by removing the data that is no-loner needed is not in the machines that we need. We can simply manage it by redistributing our data in the machines using a deterministic hash function. We just need the machine responsible for any vertex to know its new location and its state (whether it is dead, low-degree, or high-degree).

To sum up, we prove this lemma by simulating $O(\log_\Delta n)$ rounds of Algorithm~\ref{alg:degreereduction} in just  $O(\log \log_\Delta n)$ rounds of an MPC algorithm using $\O{n^\epsilon}$ space per machine and \Ot{m} overall space. The main difficulty that we face is minimizing the space that our algorithm uses. We handle that by gathering the neighborhood of just the high-degree vertices and controlling the space used by any high-degree vertex by a dynamic budget.  We  also provide a technique to update the neighborhood of these vertices throughout the algorithm.
\end{proof}

\section{Load Balancing}\label{sec:load-balancing}

This section addresses one of the technical details faced by low-memory \MPC{} graph algorithms. Assume that we are given a function which we need to compute for all the vertices in the graph. For any vertex $v$ the value of this function is based on its neighbors. The complication here arises from the fact that the degree of the vertices can be larger than the memory of the machines. Therefore, we are not able to simply gather neighbors of $v$ in one machine and compute the given function. Examples of such functions that we need in our algorithms are as follows.
\begin{itemize}
\item Finding the degree of the vertices.
\item For any vertex in the graph find the neighbor with the minimum label. One usage of this is in the Luby's algorithm when each vertex picks a random number and we need to find the vertices who have the minimum number among their neighbors.
\item Some vertices of the graph are chosen to be in the MIS and each vertex needs to know whether it is adjacent to any such vertex or not.
\end{itemize}

To give a general algorithm that applies to all such function we defined \emph{separable functions} in Definition \ref{def:separable}. One can easily see that all the mentioned problems can be modeled by such a function. In Lemma~\ref{lem:generalmaxload} we prove that there is an algorithm that solves these problems in $\O{1/\epsilon}$ rounds of MPC using $O(n^{\epsilon})$ space per machine and $O(m)$ total space. We first need the following auxiliary lemma.

\begin{lemma}\label{lem:maxload}
Let $A$ be a set of real numbers and let $\sigma$ be a constant number in $(0,1)$ where for any $x\in A$ we have $1<x<n^{\sigma}$.
There exists a hash function that uses $\O{\log p}$ random bits and for any integer $p$ distributes elements of $A$ into $p$ partitions such that, with high probability, sum of the numbers in each partition is $\O{(\sum_{x\in A}{x}/p + n^{\sigma})\log{p}}$.
\end{lemma}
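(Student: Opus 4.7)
The approach is a standard weighted balls-and-bins argument with hashing. Let $S := \sum_{x \in A} x$ and, for a hash function $h \colon A \to [p]$, let $Y_i := \sum_{x \in A} x \cdot \mathbf{1}[h(x) = i]$ denote the total weight assigned to bin $i$ for each $i \in [p]$. The goal is to exhibit an $h$ describable with $\O{\log p}$ random bits such that $\max_i Y_i = \O{(S/p + n^\sigma)\log p}$ with high probability.

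First, I would draw $h$ from a $k$-wise independent hash family with $k = \Theta(\log p)$, realized by degree-$(k-1)$ polynomials over a finite field of size $\poly(p)$; this family has a seed of length $\O{\log p}$ bits (after careful bookkeeping of the field size). Pairwise independence alone already yields $\mathbb{E}[Y_i] = S/p$ for every bin $i$, which handles the ``mean'' term in the target bound.

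Second, I would apply a Chernoff/Bernstein-style tail bound to each $Y_i$. Each summand $x \cdot \mathbf{1}[h(x) = i]$ lies in $[0, n^\sigma]$, so standard concentration gives a tail of the form
\[
\Pr\!\bigl[Y_i \geq S/p + t\bigr] \;\leq\; \exp\!\bigl(-\Omega(t/n^\sigma)\bigr)
\]
once $t$ dominates the variance term. Setting $t = C \cdot n^\sigma \log p$ for a large enough constant $C$ drives the failure probability below $p^{-c}$ for any desired $c$. Taking a union bound over the $p$ bins then yields, with high probability,
\[
\max_i Y_i \;\leq\; S/p + \O{n^\sigma \log p} \;=\; \O{(S/p + n^\sigma)\log p},
\]
which is exactly the claimed guarantee.

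The main obstacle is that an $\O{\log p}$-bit seed precludes the full independence required by the textbook Chernoff bound. I would circumvent this by using a $k$-th moment inequality: with $k = \Theta(\log p)$-wise independence, $\mathbb{E}[(Y_i - S/p)^k]$ can be estimated exactly as in the fully independent case, and Markov's inequality applied to this moment recovers the $p^{-c}$-style tail above. A secondary subtlety arises when $S/p$ is smaller than $n^\sigma$: in that regime the variance term is negligible and the concentration argument simplifies, still delivering $Y_i = \O{n^\sigma \log p}$, so the combined bound $\O{(S/p + n^\sigma)\log p}$ holds uniformly across the range of $S$.
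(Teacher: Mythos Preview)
Your concentration step is fine, but the seed-length claim does not hold. A degree-$(k-1)$ polynomial over a field of size $q$ needs $k\lceil\log q\rceil$ random bits for its coefficients; with $k = \Theta(\log p)$ and $q \geq p$ (and in fact $q$ must be at least the domain size, which here is $|A|$ and is left unbounded by the lemma), this is already $\Omega(\log^{2} p)$ bits, not $\O{\log p}$. No known construction gives $\Theta(\log p)$-wise independence over a super-constant domain with only $\O{\log p}$ seed bits, so the phrase ``after careful bookkeeping of the field size'' is hiding a genuine obstruction rather than a technicality.

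The paper avoids weighted concentration altogether. It sorts $A$ by weight and slices it into consecutive groups of size at most $p$; a telescoping argument (the maximum of every group past the first is at most the average of the preceding group, while the first group's maximum is at most $n^{\sigma}$) shows that one arbitrary element taken from each group contributes $\O{S/p + n^{\sigma}}$ in total. Each group is now an \emph{unweighted} instance of $p$ balls into $p$ bins, for which the limited-independence result of Schmidt--Siegel--Srinivasan yields max load $\O{\log p}$ using only $\O{\log p}$ random bits, and the same hash can be reused across all groups. Summing the per-group contributions gives $\O{(S/p + n^{\sigma})\log p}$ per bin with the correct seed length. The idea missing from your attempt is precisely this reduction to the unweighted $p$-into-$p$ case, which is what makes the $\O{\log p}$ randomness budget achievable.
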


\begin{proof}
We first partition elements of $A$, based on their weights, into $p$ subsets of size at most $p$  which we denote by $A_1, \cdots A_p$. Subset $A_1$ contains the $ m :=\lceil |A|/p \rceil$ greatest numbers and $A_p$ contains the smallest numbers. Observe that if we pick an arbitrary number from each subset, their summation is $\O{\sum_{x\in A}{x}/p + n^{\sigma}}$. Also, as an application of balls and bins problem which is formally proven in \cite{DBLP:conf/soda/SchmidtSS93}, there exists a hash function that using $\O{\log{p}}$ random bits, distributes $p$ balls into $p$ bins such that, with high probability, the maximum load of the bins is $\O{\log{p}}$. Therefore, if we distribute elements of $A$ into $p$ partitions using such a hash function, w.h.p., the summation of the numbers in each partition is $\O{(\sum_{x\in A}{x}/p + n^{\sigma})\log{p}}$.
\end{proof}

\restate{Lemma~\ref{lem:generalmaxload}}{\separablelemma{}}

\begin{proof}

Assume that we have $\O{(m/n^\epsilon) \log{n}}$ machines, each with space $\O{n^\epsilon}$. Draw $\log n$ random bits and share them with all the machines. By Lemma~\ref{lem:maxload}, there exists a hash function $h$ that using these random bits gives an assignment of vertices to machines such that w.h.p., the overall degree of the vertices assigned to each machine is bounded by $\O{n^\epsilon + \Delta + \log}$ where $\Delta$ is the maximum degree in the graph.  If $\Delta = \O{n^{\epsilon}}$, for any vertex $v$ we can gather $n_v$ in the machine that $v$ is assigned to and compute $F_v(n_v)$. However, this is not a valid assumption since $\Delta$ can be as large as $n$. In that case, we use the fact that $F_v$ is a separable function. Let $d_v$ denote the number of machines that contain a piece of information that we need to compute function $F_v$. At the beginning, $d_v$ is bounded by the number of machines.  Since $F$ is a separable function, to be able to complete the proof using Lemma~\ref{lem:maxload}, it suffices if for any vertex $v$ we somehow limit $d_v$ by $\O{n^{\epsilon}}$. For instance, assume that function $F_v$ is the degree of vertex $v$. After distributing the edges, in any machine  $m$ we compute the number of edges that vertex $v$ has in $m$; therefore, the size of the data needed to compute $F_v$ decreases to the number of machines that have at least one edge of $v$. Denote by $\Delta'$ the maximum of $d_v$ among all the vertices. We give an algorithm that in each iteration decrease $\Delta'$ by a factor of $n^\epsilon$. We first cluster the machines into bundles of size $\Delta'/n^{\epsilon}$. The total space of each bundle is $\Delta'$. We treat each bundle as a machine with space $\Delta'$, and use the hash function $h$ to give an assignment of vertices to the machines (which are bundles here). Consider a piece of information related to a vertex that is assigned to bundle $b$. This algorithm randomly sends it to one of the machines in this bundle. In this way, w.h.p., none of the machines is overloaded and $\Delta'$ decreases by a factor of $n^\epsilon$. After repeating this for $\O{1/\epsilon}$ times, w.h.p., $\Delta'$ decreases to one and we simply compute $F_v$ for any vertex $v$ in a single machine.   
\end{proof}

\section{Acknowledgements}
The authors would like to thank Saeed Seddighin for useful discussions and for bringing up the question of whether matching can be solved in sublogarithmic rounds of low-memory \MPC{}.

\bibliographystyle{plain}
\bibliography{references}
	
\end{document}